\documentclass[10pt,a4paper,twocolumn]{ieeetran} 
\usepackage{cite}
\usepackage{enumerate}
\usepackage{stmaryrd}
\usepackage{comment}
\usepackage{hyperref}
\usepackage{amsmath,amssymb,amsthm}

\def\R{{\mathbb R}}
\def\N{{\mathbb N}}

\def\C{\mathcal{C}}
\def\F{\mathcal{F}}
\def\H{\mathcal{H}}
\def\L{\mathcal{L}}

\def\I{\mathcal{I}}

\def\md{\boldsymbol{\tau}} 

\def\zero{\mathbf{0}}
\def\ess{\mathrm{ess.}}

\newcommand\Lint[1]{\L(#1)}
\newcommand\Nin[1]{\mathcal{N}\big(#1\big)}
\def\wto{\rightharpoonup}
\def\tforall{\text{for all }}
\def\f{\mathbf{f}}

\def\xx{\mathsf{x}}
\def\zz{\mathsf{z}}

\newcommand{\mex}{$\hfill\circ$} 

\newcommand{\conc}[1]{\substack{\displaystyle\frown\\ {\,\scriptstyle #1\,}}}
\newcommand{\norm}[1]{\llbracket #1\rrbracket}

\DeclareMathOperator*{\esssup}{\mathrm{ess.sup}}

\newtheorem{thm}{Theorem}[section]
\newtheorem{defin}[thm]{Definition}
\newtheorem{prop}[thm]{Proposition}
\newtheorem{cor}[thm]{Corollary}
\newtheorem{lem}[thm]{Lemma}

\newtheorem{claim}{Claim}
\newtheorem{as}[thm]{Assumption}
\newtheorem{rem}[thm]{Remark}

\begin{document}
\title{Time-Delay Systems with Discrete and Distributed delays:\\
  Discontinuous Initial Conditions and Reachability Sets}
\author{Hernan Haimovich, \IEEEmembership{Senior Member, IEEE}, Jos\'e L. Mancilla-Aguilar
\thanks{Supported in part by Agencia I+D+i PICT 2021-I-A-0730, Argentina.}
\thanks{H. Haimovich is with the International Center for Systems and Information Science (CIFASIS), CONICET--UNR, Argentina (e-mail: haimovich@cifasis-conicet.gov.ar).}
\thanks{J.L. Mancilla-Aguilar is with the Facultad de Ingenier\'{\i}a, Universidad de Buenos Aires, Argentina (e-mail: jmancill@fi.uba.ar).}
} 

\maketitle

\begin{abstract}
Time-invariant finite-dimensional systems, under reasonable continuity assumptions, exhibit the property that if solutions exist for all future times, the set of vectors reachable from a bounded set of initial conditions over bounded time intervals is also bounded. This property can be summarized as follows: forward completeness implies bounded reachability sets. By contrast, this property does not necessarily hold for infinite-dimensional systems in general, and time-delay systems in particular. Sufficient conditions for this property to hold that can be directly tested on the function defining the system dynamics are only known in the case of systems with pointwise (or discrete) delays. This paper develops novel sufficient conditions for the boundedness of the reachability sets of time-delay systems involving mixed pointwise and distributed delays. Broad classes of systems satisfying these conditions are identified. 
\end{abstract}

\begin{keywords}
  Time-delay systems, forward completeness, reachability sets, weak-* convergence, distributed delays.
\end{keywords}

\section{Introduction}
\label{sec:introduction} 

\subsection{Time-delay Systems with Mixed Delays}

In the context of systems theory, the concept of state for a time-delay system is given by the history of the evolution of the ``state'' vector (which is not the true system state) over a time period whose length equals the maximum delay involved. The standard theory of time-delay systems considers initial conditions that are continuous functions \cite{halver_book93,karjia_book11,fridman_book14}. Properties of time-delay systems that involve the behavior of sets of solutions, as opposed to single solutions, may require the consideration, albeit in a limiting sense, of solutions arising from discontinuous initial conditions \cite{manhai_auto24}. One such property is the boundedness of reachability sets—namely, that for any bounded set of initial conditions, the corresponding set of reachable vectors remains bounded over any compact time interval.

The boundedness of reachability sets (BRS), very related to robust forward completeness \cite{karjia_book11}, is an essential property that is required or implied by many type of stability results \cite{karpep_jco22,chakar_mcss23}, and is fundamentally necessary for input-to-state stability (ISS) related properties \cite{mirwir_cdc17,mirwir_tac18,mirpri_siamrev20}. For time-invariant finite-dimensional systems satisfying specific (Lipschitz) continuity assumptions, the mere fact that all maximally defined solutions exist for all future times, a property called forward completeness (FC), is actually equivalent to BRS \cite{linson_jco96}. For infinite-dimensional systems, including those involving delays, BRS is known to be stronger than FC \cite{mirwir_tac18,manhai_auto24}. Infinite-dimensional systems for which FC holds but not BRS may lack standard bounds on the state norm given by well-known stability properties such as ISS, see e.g. \cite{schmid_mcss19}. 


Time-delay systems containing both discrete and distributed delays are considered in this paper, defined by a retarded functional differential equation of the general form
\begin{align}
    \label{eq:f-intro}
  \dot x(t) = \f(t,x_t),
\end{align}
with $x_t$ the evolution history at time $t$ of the state vector $x$, which is a function with domain $[-\md,0]$ and whose value is given by $x_t(s) = x(t+s) \in \R^n$, and with $\md$ the maximum delay. This type of time-delay system is said to involve only constant discrete delays when $\f$ can be written as
\begin{align*}
  \f(t,x_t) &= g(t,x_t(0),x_t(-\tau_1),\ldots,x_t(-\tau_\ell))\\
  &= g(t,x(t),x(t-\tau_1),\ldots,x(t-\tau_\ell))
\end{align*}
for some function $g$. If this is not possible, then the system can be said to involve nonconstant and/or distributed delays.

In~\cite{manhai_auto24}, it was shown that time-invariant time-delay systems involving only constant discrete delays and having Lipschitz continuous dynamics exist that exhibit FC but not BRS. The fact that FC is not equivalent to BRS enables the lack of other connections known to hold for time-invariant systems without delay. For example, a time-delay system may be globally asymptotically stable but not uniformly so \cite{manhai_auto24} or not even uniformly globally attractive \cite{chawir_lcss24}. In~\cite{bricha_lcss24}, it was shown that FC and BRS become equivalent if FC holds for initial conditions in a set larger than that comprised of only continuous functions. \cite{manhai_auto24}, \cite{chawir_lcss24} and~\cite{bricha_lcss24} consider time-delay systems having only constant discrete delays. Consideration of larger sets of initial conditions brings about important technical details related to existence and uniqueness of solutions. In this regard, the results of \cite{bricha_lcss24} apply some results of \cite{delmit_jde72} where the Cauchy problem for time-delay systems is formulated allowing for discontinuous initial conditions in Banach spaces related to $L^p$. However, most of the results in \cite{delmit_jde72} are not valid for $p=\infty$. In particular, note that the formulation of the Cauchy problem in equations (3.8) to (3.10) therein is said to make sense by Proposition 4.3, which does not hold for $p=\infty$.

In this paper, the BRS of systems with discrete and distributed delays is analyzed, with initial conditions in a specific Banach space that allows for discontinuous initial conditions (the precise formulation will become clear in Section~\ref{sec:equiv-init-cond}). The motivation for this consideration arises from the fact that BRS, as opposed to FC, requires the analysis of sets and sequences of solutions rather than single solutions, and that bounded sequences of continuous initial conditions may converge to discontinuous functions. In \cite{mironc_scl23}, the property called robust forward completeness, which is stronger than BRS but equivalent to it under globally bounded inputs, is characterized by means of necessary and sufficient conditions, including some based on Lyapunov-type functions. The aim of the current paper is to, alternatively, provide results that can be checked based on knowledge of the function defining the system dynamics, i.e. the function $\f$ in \eqref{eq:f-intro}, and suited especially to time-varying systems with mixed pointwise and distributed delays. 



Examples of nonlinear systems with distributed or mixed delays that cannot exceed some maximum value 
can be found in
\cite{cookap_mb76, willeg_siamjma82, guolak_jmaa88, kertes_amh90, bendap_book07, Chrif2015, aleefi_scl24}.
\cite{cookap_mb76} considers nonlinear systems with distributed delays under a periodicity condition.
\cite{willeg_siamjma82} and \cite{guolak_jmaa88} analyze infectious disease models that are nonlinear and include finite distributed delays.
\cite{kertes_amh90} addresses a type of nonlinear system with mixed delays, but where the distributed delays form part of the coefficients of a linear system with pointwise delays.
The book \cite{bendap_book07} contains many examples of delay systems of very general forms but where linearity plays a central role. 
The Nicholson’s blowflies model considered in \cite[eq.~(1.2)]{Chrif2015} contains several concentrated as well as distributed delays and is nonlinear.
A nonlinear distributed delay system with homogeneity properties is considered in \cite{aleefi_scl24}. The publications \cite{cookap_mb76, willeg_siamjma82, guolak_jmaa88, kertes_amh90, bendap_book07, Chrif2015, aleefi_scl24} consider only continuous initial conditions.

In this context, our main contributions are the following. First, we provide checkable conditions under which the mere existence of solutions is sufficient to ensure the BRS of time-varying systems with mixed discrete and distributed delays. Second, we identify broad classes of systems that satisfy the conditions, and relate these to examples in the literature. To the best of our knowledge, the results provided are completely novel and constitute a basis for the subsequent analysis of BRS for mixed-delay systems with inputs.

The remainder of this paper is organized as follows. Section~\ref{sec:notation} introduces the notation employed throughout. Section~\ref{sec:syst-exist-uniq} covers essential existence and uniqueness results for time-delay systems with initial conditions not restricted to continuous functions. Sufficient conditions for the boundedness of the reachability sets of nonlinear time-varying systems with mixed delays are given in Section~\ref{sec:FCandBRS}. 
Conclusions are drawn in Section~\ref{sec:conclusions}. The appendix contains some important technical proofs.

\subsection{Notation}
\label{sec:notation}

The reals, nonnegative reals, naturals and naturals including zero are denoted by $\R$, $\R_{\ge0}$, $\N$ and $\N_0$. For $a,b\in\R$, $a \vee b := \max\{a,b\}$. For $x\in\R^n$, $|x|$ denotes the infinity norm of $x$, {\emph{i.e.}} $|x|=\max_{i=1,\ldots,n}|x_i|$, $x_i$ being the $i$-th component of $x$. For every $R\ge 0$ and $n\in\N$, $B_R^n := \{x\in\R^n : |x|\le R\}$ denotes the closed ball of radius $R$ in $\R^n$. 
For a matrix $A\in \R^{p\times q}$, $A^T$ denotes its transpose and $\|A\|=\max_{|x|=1}|Ax|$ its induced operator norm. 
Given $\md\ge 0$, $\L^\infty := \L^\infty([-\md,0],\R^n)$ denotes the set of essentially bounded Lebesgue measurable functions $\phi : [-\md,0] \to \R^n$, $L^\infty:=L^\infty([-\md,0],\R^n)$ denotes the set of equivalence classes of functions in $\L^\infty([-\md,0],\R^n)$ where two functions are in the same equivalence class if they coincide almost everywhere, and $\C:=\C\left([-\md,0],\R^{n}\right)$ denotes the set of continuous functions $\phi : [-\md,0] \to \R^n$, so that $\C \subset \L^{\infty}$. The supremum and essential supremum of a function $\phi \in \L^\infty$ are distinguished by including the subscript~`$_\infty$' for the pointwise supremum: $\|\phi\|_{\infty} := \sup_{-\md\le t \le 0} |\phi(t)|$ and $\|\phi\| := \ess\sup_{-\md\le t \le 0} |\phi(t)|$. Note then that $\|\phi\|\le \|\phi\|_{\infty} \le \infty$ and $\|\phi\| < \infty$ for every $\phi\in\L^\infty$. For $\psi \in L^\infty$, $\|\psi\|$ denotes $\|\phi\|$ with $\phi\in\L^\infty$ any representative of the equivalence class $\psi$.
Also, $L^1 := L^1([-\md,0],\R^n)$ is the set of equivalence classes of Lebesgue measurable functions $g$ that satisfy $\|g\|_1 := \sum_{i=1}^n\int_{-\md}^0 |g_i(s)| ds < \infty$, where two functions are in the same equivalence class if they differ on a set of measure zero.
Closed balls of radius $R$ in $L^1$, $L^\infty$ and $\L^\infty$ are denoted by $B_R^{L^1} :=  \{ g \in L^1 : \| g\|_1 \le R \}$, $B_R^{L^\infty} := \{ \phi \in L^\infty : \|\phi\| \le R \}$ and $B_R^{\L^\infty} := \{ \phi \in \L^\infty : \|\phi\|_{\infty} \le R \}$. 
For $a\in \R$, let $\I_a$ be the class of intervals $J$ of the form $J=[a,b)$ or $[a,b]$ with $a<b\le \infty$.
For a function $\xx:J\to \R^n$, 
with $J\in \I_{t_0-\md}$, and any $t\in [t_0,\infty)\cap J$, 
$\xx_t$ is defined as the function $\xx_t:[-\md,0]\to \R^n$ satisfying $\xx_{t}(s)=\xx(t+s)$ for all $s\in [-\md,0]$. Given $\phi \in \mathcal{L}^{\infty}$ and $x:J\to \R^n$, with $J\in \I_{t_0}$, $(\phi\conc{t_0} x) : [t_0-\md,t_0)\cup J \to \R^n$ denotes the concatenation of $\phi$ and $x$, defined as 
\begin{align}
    \label{eq:conc-def}
    (\phi\conc{t_0} x)(t) &= 
    \begin{cases}
        \phi(t-t_0) &\text{if }t\in [t_0-\md,t_0),\\
        x(t) &\text{if }t\in J.
    \end{cases}
\end{align}
Given a function $\xx:J\to \R^n$ and numbers $\tau_i \in \R_{\ge 0}$ for $i=0,\ldots,\ell$, the notation $\xx(t-\tau_{i:j})$ is used as abbreviation for $(\xx(t-\tau_i),\xx(t-\tau_{i+1}),\ldots,\xx(t-\tau_j)) \in \R^{(j-i+1)n}$.
A function $g:I\times \R^q \to \R^k$, with $I\subset \R$, is said to be Carath\'eodory if it is Lebesgue measurable in the first argument and continuous in the others.

\section{System, Existence and Uniqueness}
\label{sec:syst-exist-uniq}

This section introduces the precise type of system considered and covers essential existence and uniqueness results. In Section~\ref{sec:essential-time-delay}, delay systems in a very general form of retarded functional differential equation are described, with initial conditions in $\L^\infty$. In Section~\ref{sec:existence-solutions}, corresponding existence and uniqueness results are provided. Section~\ref{sec:discrete-and-distributed} particularizes the system model to one allowing for a combination of discrete and distributed delays. Section~\ref{sec:equiv-init-cond} proves that the assumptions enable the consideration of a specific Banach space as a suitable set of initial conditions. 

\subsection{Time-delay Systems}
\label{sec:essential-time-delay}

Consider time-delay systems of the form
\begin{equation}
  \begin{split}
    \label{eq:td-gen}
    \dot{x}(t) &= \f(t, (\phi\conc{t_0} x)_t),
    \qquad t\ge t_0,\\
    \qquad x(t_0) &= \phi(0),
    \qquad \phi\in \L^\infty,
  \end{split}
\end{equation}
with $\f : \R^n \times \L^\infty \to \R^n$.
In this paper, a solution of~(\ref{eq:td-gen}) is a locally absolutely continuous function $x : J \to \R^n$, with $J\in \I_{t_0}$, which satisfies
\begin{align}
    \label{eq:td-gen-sol}
    x(t) = \phi(0) + \int_{t_0}^t \f(s, (\phi\conc{t_0} x)_s ) ds
\end{align}
for all $t\in J$. The time $t_0$ in~(\ref{eq:td-gen}) is called the initial time and the function $\phi$ is the initial condition at time $t_0$. A solution from this initial condition is one that satisfies~(\ref{eq:td-gen-sol}). A solution $x:J \to \R^n$ corresponding to the initial time $t_0$ and the initial condition $\phi$ is maximally defined (in forward time) if there is no other solution $y : J^{\prime} \to \R^n$ with the same initial time and initial condition that is a strict extension of $x$. 
  
\subsection{Existence and Uniqueness of Solutions} 
\label{sec:existence-solutions}

\begin{as}
  \label{as:Lip} The function $\f:\R_{\ge 0} \times \L^\infty \to \R^n$ satisfies the following:
  \begin{enumerate}[i)] 
 \item For every $R\ge 0$ and $T>0$, there exist $L=L(R,T) \ge 0$ and a zero-measure set $Z \subset \R_{\ge 0}$ such that
  \begin{align*}
    |\f(t,\phi^1) - \f(t,\phi^2)| \le L \|\phi^1 - \phi^2\|_{\infty} 
  \end{align*}
  for all $\phi^1,\phi^2 \in B_{R}^{\L^{\infty}}$ and $t\in [0,T] \setminus Z$;\label{item:fphicont}
  \item $\f(\cdot,\zero)$ is locally essentially bounded;\label{item:f0bound}
  \item for every $t_0\ge 0$, $T>t_0$, $\phi \in \L^\infty$ and continuous $x:[t_0,T]\to \R^n$, the function $t\mapsto \f(t,(\phi\conc{t_0} x)_t)$ is Lebesgue measurable.\label{item:ftmeas}
  \end{enumerate}
\end{as}
The constant $L=L(R,T)$ of Assumption~\ref{as:Lip}\ref{item:fphicont}) is referred to as the Lipschitz constant of $\f$ corresponding to $R$ and $T$. 
\begin{rem}
    When considering initial conditions in 
    $\L^p$, the set of Lebesgue measurable functions $\phi$ for which $\sum_{i=1}^n \int_{-\md}^0 |\phi_i(s)|^p ds<\infty$, with $1 \le p < \infty$, and provided item~\ref{item:fphicont}) holds with $\|\cdot\|_{\infty}$ replaced by $\|\cdot\|_p$, then item~\ref{item:ftmeas}) can be replaced by $\f(\cdot, \phi)$ being measurable for every $\phi$. In that case, the current item~\ref{item:ftmeas}) would follow taking into account that the map $t \mapsto (\phi\conc{t_0}x)_t$, with values in $\L^p$, is continuous \cite[Proposition~4.3(i)]{delmit_jde72}. However, the measurability requirement \ref{item:ftmeas}) is unavoidable when considering initial conditions in $\L^\infty$ (or $L^\infty$, see Section~\ref{sec:equiv-init-cond}).
\end{rem}
\begin{prop}
  \label{prop:exist-uniq}
  Let $\f$ satisfy Assumption~\ref{as:Lip}. Then, for each $t_0\ge 0$ and $\phi\in \L^\infty$ with $\|\phi\|_{\infty} < \infty$, there is a unique maximally defined solution $x : [t_0,T_x) \to \R^n$ of (\ref{eq:td-gen}). Moreover, if $\sup_{t_0 \le t < T_x} |x(t)| < \infty$, then $T_x = \infty$.
\end{prop}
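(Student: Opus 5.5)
The plan is the standard Picard/contraction argument, adapted to the concatenated history $(\phi\conc{t_0} x)_t$ and the pointwise sup norm $\|\cdot\|_\infty$ on $\L^\infty$.

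\textbf{Local existence and uniqueness.} Fix $t_0\ge0$ and $\phi\in\L^\infty$ with $\|\phi\|_\infty<\infty$. Choose $R>\|\phi\|_\infty+1$, some $T>t_0+1$, and let $L=L(R,T)$ be given by Assumption~\ref{as:Lip}\ref{item:fphicont}). For $\delta\in(0,1]$, let $X_\delta$ be the set of continuous $x:[t_0,t_0+\delta]\to\R^n$ with $x(t_0)=\phi(0)$ and $|x(t)|\le R$, endowed with the sup norm; it is a complete metric space. For $x\in X_\delta$ and every $t$, we have $\|(\phi\conc{t_0} x)_t\|_\infty\le R$, and for $x,y\in X_\delta$,
\begin{align*}
\|(\phi\conc{t_0} x)_t-(\phi\conc{t_0} y)_t\|_\infty\le \sup_{t_0\le s\le t}|x(s)-y(s)|,
\end{align*}
since the $\phi$-parts cancel pointwise. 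It is important to use $\|\cdot\|_\infty$ here and not the essential sup. Define
\begin{align*}
(Px)(t)=\phi(0)+\int_{t_0}^t \f(s,(\phi\conc{t_0} x)_s)\,ds .
\end{align*}
The integrand is measurable by item~\ref{item:ftmeas}). It is also essentially bounded by $M:=\ess\sup_{[0,T]}|\f(s,\zero)|+LR$, using items~\ref{item:f0bound}) and~\ref{item:fphicont}) with $\phi^2=\zero$. Hence $P$ maps $X_\delta$ into continuous functions, and into $X_\delta$ as soon as $\delta M\le 1$. It is a contraction once $\delta L<1$, so Banach's fixed point theorem gives a local solution. The same Lipschitz estimate combined with Gronwall's inequality yields uniqueness on any common interval of two solutions, with $R$ chosen larger than both solutions' bounds there.

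\textbf{Maximal solution.} Uniqueness shows that any two solutions agree on their common domain. The maximal solution is therefore the union of all solutions, defined on an interval $J\in\I_{t_0}$. This interval is of the form $[t_0,T_x)$: if it were closed, $[t_0,T_x]$, we could restart at time $T_x$ with the initial condition $(\phi\conc{t_0}x)_{T_x}\in\L^\infty$, which has finite $\|\cdot\|_\infty$. The concatenated solution still satisfies~(\ref{eq:td-gen-sol}) because $(\phi\conc{t_0}x)_t$ coincides with the concatenation of the new initial condition with the new piece, and this would contradict maximality.

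\textbf{Blow-up alternative.} This is the step needing care. Suppose $T_x<\infty$ and $\sup|x|=:K<\infty$. Take $R>\max\{K,\|\phi\|_\infty\}$ and $T>T_x$. Then the integrand in~(\ref{eq:td-gen-sol}) is bounded by $M(R,T)$ almost everywhere on $[t_0,T_x)$, so $x$ is Lipschitz there and $x(T_x^-)$ exists. Extending $x$ to $T_x$ by this limit, the integral identity still holds at $T_x$ by dominated convergence, using continuity of the extension and item~\ref{item:ftmeas}). This gives a solution on $[t_0,T_x]$, which contradicts the form $[t_0,T_x)$ of the maximal interval. The main obstacle throughout is being careful with measurability and with the use of the pointwise norm $\|\cdot\|_\infty$, rather than the essential one, for discontinuous $\phi$. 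Item~\ref{item:ftmeas}) is exactly what handles this.
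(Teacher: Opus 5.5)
Your proposal is correct and follows the paper's proof essentially step by step. The common steps are a Banach fixed-point argument on a ball of continuous functions with the pointwise norm $\|\cdot\|_\infty$, Gronwall's inequality for uniqueness, and, for the blow-up alternative, extending a bounded solution to $[t_0,T_x]$ via its Lipschitz continuity to contradict maximality. The one difference is that you spell out how the maximal solution is built (the union of all solutions, plus restarting from $(\phi\conc{t_0}x)_{T_x}$ to rule out a closed domain), which the paper simply attributes to standard arguments.
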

\begin{proof} 
  Let $t_0\ge 0$, $\phi\in \L^{\infty}$ and $\alpha \in (t_0 , t_0+1]$.
  Consider the operator $P : \C_{\alpha} \to \C_{\alpha}$, where $\C_{\alpha}$ is the set of continuous functions $y:[t_0,\alpha]\to \R^n$, defined by
  \begin{align}
    \label{eq:Pop-def}
  Py(t) &= 
    \phi(0) + \int_{t_0}^t \f\left(s, (\phi\conc{t_0} y)_s\right) ds 
  \end{align}
  Note that $P$ is well-defined due to Assumption~\ref{as:Lip}\ref{item:ftmeas}).
  For any $y \in \C_{\alpha}$ and $R\ge \max\{\|y\|_{\infty},\|\phi\|_{\infty}\}$, it is true that
\begin{align*}
  \esssup_{s\in [t_0,\alpha]}\left| \f\big(s, (\phi\conc{t_0} y)_s\big) \right| &\le \esssup_{t_0\le t \le t_0+1}|\f(t,\zero)| + L R =: F_R
\end{align*}
where Assumption~\ref{as:Lip}\ref{item:fphicont}) has been used, with $L$ the Lipschitz constant of $\f$ corresponding to $R$ and $T=t_0+1$, and where $F_R$ is finite due to Assumption~\ref{as:Lip}\ref{item:f0bound}).
The operator $P$ satisfies
\begin{align*}
  |Py(t)| &\le |\phi(0)| + \int_{t_0}^t \left| \f\big(s, (\phi\conc{t_0} y)_s\big) \right| ds\\
  &\le \|\phi\|_{\infty} + F_R (t-t_0)\quad\text{for all }t\in [t_0,\alpha].
\end{align*}
Let $R > \|\phi\|_{\infty}$ and assume that, in addition, $\alpha \le t_0+(R-\|\phi\|_{\infty})/F_R$. Then, for every $y$ satisfying $\|y\|_{\infty}\le R$, it follows that $\|Py\|_{\infty}\le R$ and thus $P : U \to U$ with $U$ defined as $U = \{y\in \C_{\alpha} : \|y\|_{\infty} \le R\}$. We claim that the restriction of $P$ to $U$ is a contraction if $\alpha$ also satisfies the condition $\lambda:=L(\alpha-t_0)<1$.
Let $y,z \in U$. For every $t_0\le t \le \alpha$, then
\begin{align*}
  |Py(t)-Pz(t)| \le \int_{t_0}^t \big|\f\big(s, (\phi\conc{t_0} y)_s\big)-\f\big(s, (\phi\conc{t_0} z)_s\big)\big| ds \\
  \le L(\alpha-t_0) \|(\phi\conc{t_0} y)_s-(\phi\conc{t_0} z)_s\|_{\infty} 
  \le \lambda \|y-z\|_{\infty},
\end{align*}
therefore
\begin{align*}
\|Py-Pz\|_{\infty}\le \lambda \|y-z\|_{\infty}
\end{align*}
and hence the claim follows.
Since $U$ is closed and $\C_{\alpha}$ is a Banach space, $U$ is a complete metric space; therefore, Banach's fixed-point theorem establishes that $P$ has a unique fixed point $\tilde x \in U$. This fixed point must satisfy
\begin{align*}
  \tilde x(t) = \phi(0) + \int_{t_0}^t \f(s, (\phi\conc{t_0} \tilde x)_s ) ds,\quad t\in [t_0,\alpha],
\end{align*}
is hence absolutely continuous on $[t_0,\alpha]$ and therefore a solution of (\ref{eq:td-gen}). 

It is now proven that for any $t_0\ge 0$ and bounded $\phi\in \L^{\infty}$, there exists at least a solution $x$ of (\ref{eq:td-gen}) which is defined in some interval $[t_0,\alpha]$ with $\alpha>t_0$. Standard arguments of the theory of functional differential equations ensure that every solution of (\ref{eq:td-gen}) can be extended to a maximal solution $x$ whose domain of definition is of the form $[t_0,T_x)$ with $T_x$ possibly infinite.

Next, take two maximally defined solutions $x: [t_0,T_x) \to \R^n$ and $y : [t_0,T_y) \to \R^n$ from the same bounded initial condition $\phi \in \L^{\infty}$ and let $T = \min\{T_x,T_y\}$ and $T' \in (t_0,T)$. Let $M=\max_{t\in [t_0,T']}\max\{|x(t)|,|y(t)|\}$
and let $L$ be the Lipschitz constant of $\f$ corresponding to $M$ and $T'$.
The difference $z=x-y$ satisfies, for $t_0\le r \le t \le T'$
\begin{align*}
  &|z(r)| \le \int_{t_0}^{r} \left| \f\Big(s, (\phi\conc{t_0} x)_s\Big) - \f\Big(s, (\phi\conc{t_0} y)_s\Big) \right| ds\\
  &\le \int_{t_0}^{t} L \|(\phi\conc{t_0} x)_s-(\phi\conc{t_0} y)_s\|_{\infty}\ ds 
  \le \int_{t_0}^{t} L \max_{t_0\le \tau\le s}|z(\tau)| ds
\end{align*}
where Assumption~\ref{as:Lip}\ref{item:fphicont}) has been used, taking into account that the value of the integral is not affected by the value of the integrand on the zero-measure set $Z$.
Therefore, the function $g(t):=\max_{t_0\le r\le t}|z(r)|$ satisfies
\begin{align}
  \label{eq:g-delta-int}
  g(t)\le \int_{t_0}^t L\: g(s) \:ds\quad \text{for all}\;t \in [t_0,T']
\end{align}
and Gronwall's lemma implies that $g(t)=0$ for all $t\in [t_0,T']$. Then, $x(t)=y(t)$ for $t\in [t_0,T']$. Since $T'<T$ is arbitrary, $x(t)=y(t)$ for all $t\in [t_0,T)$. If $T_x \neq T_y$, then the latter would contradict the fact that one of the solutions is maximally defined. As a consequence, $T_x = T_y$ and any maximally defined solution is unique.

  Finally, let $\sup_{t_0 \le t < T_x} |x(t)| < \infty$ and suppose for a contradiction that $T_x < \infty$. From Assumption~\ref{as:Lip}, then $x$ is Lipschitz and hence uniformly continuous on $[t_0,T_x)$, and $\lim_{t\to T_x^-} x(t)$ exists. This means that $x$ can be extended to a solution on $[t_0,T_x]$, which contradicts $x$ being maximally defined. Therefore, $T_x = \infty$.
\end{proof} 
\begin{rem}
The proof of Proposition~\ref{prop:exist-uniq} follows the standard arguments used to establish existence and uniqueness for Lipschitz retarded functional differential equations; see, for example, \cite[Chapter~4]{bendap_book07}.   
\end{rem}

\subsection{Combining Discrete and Distributed Delays}
\label{sec:discrete-and-distributed}

Time-delay systems with a combination of discrete and distributed delays will be considered, as follows. 
Let $\f$ in~(\ref{eq:td-gen}) be of the following form
\begin{align}
  \label{eq:f-pt-dist-p}
  \f(t,\phi) = \tilde f(t,(\phi(0),\phi(-\tau_1),\ldots,\phi(-\tau_\ell)),\phi)
\end{align}
with $0 < \tau_1 < \cdots < \tau_\ell = \md$ and $\tilde f : \R_{\ge 0} \times \R^{(\ell+1) n} \times \L^\infty \to \R^n$ satisfying
\begin{align}
    \label{eq:zerom-same}
  \| \phi - \psi \| = 0 \quad\Rightarrow\quad 
  \tilde f(t,\xi_{0:\ell},\phi) = \tilde f(t,\xi_{0:\ell},\psi) 
\end{align}
for all $t\in\R_{\ge 0}$, all
\begin{align}
  \label{eq:xi-vector}
  \xi_{0:\ell} := (\xi_0,\xi_1,\ldots,\xi_\ell)\in\R^{(\ell+1) n},
\end{align}
and all $\phi,\psi \in \L^\infty$. The function $\tilde f$ models the specific way in which pointwise or discrete delays on the one hand, and distributed delays on the other, affect the system. According to~\eqref{eq:zerom-same}, the effect of the distributed delay, via the last argument of $\tilde f$, is insensitive to differences over zero-measure sets. Therefore, there exists $f : \R_{\ge 0} \times \R^{(\ell+1) n} \times L^\infty \to \R^n$ such that
\begin{align}
  \label{eq:f-pt-dist}
  f(t,\xi_{0:\ell},\bar\phi) = \tilde f(t,\xi_{0:\ell},\phi)
\end{align}
with $\bar\phi \in L^\infty$ the equivalence class of $\phi \in \L^\infty$. Hereafter, no distinction will be made between an element $\phi\in \L^\infty$ and its equivalence class in $L^\infty$ whenever no ambiguity arises. Therefore, the system
\begin{subequations}
  \label{eq:system-pt-dist}
  \begin{align}
    &\dot x(t) = f(t,\xx(t-\tau_{0:\ell}),\xx_t)\\
    \label{eq:xxtau-def}
    &\xx(t-\tau_{0:\ell}) = (\xx(t-\tau_0),\ldots,\xx(t-\tau_\ell)),\quad
    \tau_0 = 0,\\
    \label{eq:xx-def}
    &\xx = (\phi\conc{t_0}x),\quad
    x(t_0) = \phi(0),\quad
    \phi \in \L^\infty
  \end{align}
\end{subequations}
defined for $t\ge t_0$, is well formulated, even if $\phi\in\L^\infty$ but the last argument of $f$ should be in $L^\infty$.
The following assumption ensures that~(\ref{eq:system-pt-dist}) has well-defined and unique maximally defined solutions.
\begin{as}
    \label{as:Lip-pt}
    The function $f : \R_{\ge 0} \times \R^{(\ell+1)n} \times L^\infty \to \R^n$ satisfies:
    \begin{enumerate}[i)]
    \item For every $R> 0$ and $T>0$, there exist $L=L(R,T)\ge 0$ and zero-measure set $Z \subset \R_{\ge 0}$ such that\label{item:fLip-pt-dist}
        \begin{align*}
            |f(t,\xi_{0:\ell},\phi) - f(t,\zeta_{0:\ell},\psi)| \le L \left(| \xi_{0:\ell} - \zeta_{0:\ell} | \vee \|\phi - \psi\|\right)
        \end{align*}
        for all $\xi_{0:\ell},\zeta_{0:\ell} \in B_R^{(\ell+1)n}$, $\phi,\psi \in B_R^{L^\infty}$, $t\in [0,T]\setminus Z$; 
    \item $f(\cdot,0,\zero)$ is locally essentially bounded;\label{item:f0bnd-pt-dist}
    \item for every $t_0 \ge 0$, $T>t_0$, $\phi\in\L^\infty$ and continuous $x : [t_0,T] \to \R^n$, the function
        $$t \mapsto f(t,(\xx(t-\tau_0),\xx(t-\tau_1),\ldots,\xx(t-\tau_\ell)),\xx_t)$$
        is Lebesgue measurable, where $\xx = (\phi\conc{t_0}x)$.\label{item:ftt-meas}
    \end{enumerate}
\end{as}
A broad class of nonlinear systems that satisfy Assumption~\ref{as:Lip-pt} under simpler checkable conditions is the following:
\begin{subequations}
    \label{eq:sys-exa1}
    \begin{align}        
        \dot x(t) &= g(t,\xx(t-\tau_{0:\ell}),\Nin{t,\xx_t}),\\
        \label{eq:functional-def}
        \Nin{t,\psi} &= \int_{-\md}^0 G(t,s,\psi(s)) ds,
    \end{align}    
\end{subequations}
    with $\xx(t-\tau_{0:\ell})$ and $\xx$ as in \eqref{eq:xxtau-def}--\eqref{eq:xx-def}, $G:\R_{\ge 0}\times [-\md,0]\times \R^n\to \R^p$,
    $\mathcal{N}: \R_{\ge 0} \times L^\infty \to \R^p$, $g:\R_{\ge 0}\times \R^{(\ell+1)n}\times \R^p\to \R^n$.
    %
\begin{prop}
    \label{prop:exa:ass2} 
    The function $f$ defined via $f(t,\xi_{0:\ell},\phi) = g(t,\xi_{0:\ell},\Nin{t,\phi})$ with $g$ as above satisfies Assumption~\ref{as:Lip-pt} whenever the following hold:
    \begin{enumerate}[a)]
    \item \label{ex:caratheodory} $g$ is Carath\'eodory and $g(\cdot,0,0)$ is locally essentially bounded;
    \item \label{ex:lipschitz} for every $R> 0$ and $T>0$, there exist $\bar L=\bar L(R,T)\ge 0$ and a zero-measure set $Z\subset \R_{\ge 0}$ so that
        \begin{align*}
            |g(t,\xi_{0:\ell},\nu) - g(t,\bar\xi_{0:\ell},\bar \nu)|\le \bar L \left(| \xi_{0:\ell} - \bar\xi_{0:\ell} |\vee |\nu - \bar \nu|\right)
        \end{align*}
        for all $\xi_{0:\ell},\bar \xi_{0:\ell} \in B_R^{(\ell+1)n}$, $\nu,\bar \nu \in B_R^{p}$, $t\in [0,T]\setminus Z$; 
    \item \label{item:caratheodory-G} $G(\cdot,s,\eta)$ is continuous for every $s$ and $\eta$, and for each fixed $t$, the mapping $G(t,\cdot,\cdot)$ is Carath\'eodory;
    \item \label{item:Lips-G} for every $R\ge 0$ and $T>0$, there exist $\hat L_{R,T}$ and $M_{R,T}\in L^1([-\md,0],\R)$ such that
        \begin{align*}
            |G(t,s,\eta)|\le M_{R,T}(s) \; &\text{and}\\
            |G(t,s,\eta) - G(t,s,\hat \eta)| &\le \hat L_{R,T}(s)  |\eta - \hat \eta| 
        \end{align*}
        for all $t\in [0,T]$, $\eta, \hat \eta \in B_R^n$, and almost all $s\in [-\md,0]$. 
    \end{enumerate}
\end{prop}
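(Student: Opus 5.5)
We verify the three items of Assumption~\ref{as:Lip-pt} in turn. First, note that $\Nin{t,\phi}$ is well defined on $L^\infty$. For $\phi\in\L^\infty$ with $\|\phi\|\le R$, the map $s\mapsto G(t,s,\phi(s))$ is measurable, because $G(t,\cdot,\cdot)$ is Carath\'eodory (item~\ref{item:caratheodory-G})). It is bounded a.e.\ by $M_{R,T}(s)$, which is integrable. Modifying $\phi$ on a null set does not change the integral, so \eqref{eq:zerom-same} holds.

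\textbf{Bounds on $\mathcal{N}$.} For $t\in[0,T]$ and $\phi,\psi\in B_R^{L^\infty}$, item~\ref{item:Lips-G}) gives
\begin{align*}
|\Nin{t,\phi}|&\le \|M_{R,T}\|_1=:\rho,\\
|\Nin{t,\phi}-\Nin{t,\psi}|&\le \|\hat L_{R,T}\|_1\,\|\phi-\psi\|.
\end{align*}
Here I read $\hat L_{R,T}$ as an element of $L^1$, which is implicit in the statement.

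\textbf{Item~\ref{item:fLip-pt-dist}).} Set $R'=\max\{R,\rho\}$ and apply item~\ref{ex:lipschitz}) with $R'$ and the corresponding zero-measure set $Z$. This yields the required Lipschitz estimate with
\[
L=\bar L(R',T)\max\{1,\|\hat L_{R,T}\|_1\}.
\]

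\textbf{Item~\ref{item:f0bnd-pt-dist}).} We have $|\Nin{t,\zero}|\le\rho$ on $[0,T]$. Item~\ref{ex:lipschitz}) then gives
\[
|g(t,0,\Nin{t,\zero})|\le |g(t,0,0)|+\bar L(\rho\vee 1,T)\,\rho
\]
for a.e.\ $t\in[0,T]$. The right-hand side is locally essentially bounded by item~\ref{ex:caratheodory}).

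\textbf{Item~\ref{item:ftt-meas}) (main obstacle).} Fix $t_0$, $T$, $\phi\in\L^\infty$, a continuous $x$, and $\xx=\phi\conc{t_0}x$.

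The pointwise-delay argument $t\mapsto \xx(t-\tau_{0:\ell})$ is Lebesgue measurable, since each component is a translate of a measurable function. Since $g$ is Carath\'eodory, $t\mapsto g(t,\xi(t),\nu(t))$ is measurable whenever $\xi$ and $\nu$ are measurable; this is the standard superposition result, shown by approximating $(\xi,\nu)$ with simple functions. It therefore suffices to show that $\nu(t):=\Nin{t,\xx_t}$ is measurable, and I will in fact show it is continuous.

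Write
\[
\nu(t)=\int_{-\md}^0 G\big(t,s,\xx(t+s)\big)\,ds .
\]
Take $t_k\to t$ in $[t_0,T]$ and let $R$ bound $\|\phi\|_\infty$ and $\|x\|_\infty$. Split
\begin{align*}
|\nu(t_k)-\nu(t)|\le{}& \int_{-\md}^0 \big|G(t_k,s,\xx(t_k+s))-G(t,s,\xx(t_k+s))\big|\,ds\\
&+\int_{-\md}^0 \hat L_{R,T}(s)\,\big|\xx(t_k+s)-\xx(t+s)\big|\,ds .
\end{align*}

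The first term is the delicate one: the third argument moves with $k$, so pointwise continuity of $G$ in $t$ alone is not enough. I would use the Lipschitz bound in $\eta$, which is uniform in $t$, to pass to a finite $\epsilon$-net $\{\eta_j\}$ of $B_R^n$:
\[
|G(t_k,s,\eta)-G(t,s,\eta)|\le 2\epsilon\,\hat L_{R,T}(s)+\max_j |G(t_k,s,\eta_j)-G(t,s,\eta_j)|.
\]
The maximum tends to $0$ for each $s$ and is dominated by $2M_{R,T}(s)$, so dominated convergence applies to it. (If $\hat L_{R,T}$ is only finite a.e., choose the net $s$-dependently; the continuity-in-$t$ argument still goes through.)

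For the second term, $\xx$ is bounded and measurable on $[t_0-\md,T]$. Approximate $\xx$ in $L^1$ by a continuous function and use continuity of translation in $L^1$, weighted by the integrable $\hat L_{R,T}$. The bound $|\xx(t_k+s)-\xx(t+s)|\le 2R$ plus truncation of $\hat L_{R,T}$ handles the weight. Hence the second term tends to $0$.

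This gives continuity of $\nu$, and therefore measurability of $t\mapsto f(t,\xx(t-\tau_{0:\ell}),\xx_t)$.
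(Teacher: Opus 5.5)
Your proof is correct. Items i), ii), and the reduction of item iii) to the measurability of $\nu(t)=\Nin{t,\xx_t}$, match the paper; the difference lies in how you handle $\nu$.

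The paper proceeds in two steps:
\begin{enumerate}
\item It first takes $\phi$ continuous with $\phi(0)=x(t_0)$, so $\xx$ is uniformly continuous and $\|\xx_{t^k}-\xx_t\|\to 0$. It then splits as $|\Nin{t^k,\xx_{t^k}}-\Nin{t^k,\xx_t}|+|\Nin{t^k,\xx_t}-\Nin{t,\xx_t}|$. The Lipschitz bound of item d) is applied at the moving time $t^k$, which is allowed since that bound is uniform over $t\in[0,T]$. The time-difference term then has the fixed argument $\xx_t(s)$, and plain dominated convergence suffices.
\item For general $\phi$, it takes bounded continuous $\phi^k\to\phi$ a.e.\ with $\phi^k(0)=x(t_0)$ and obtains $\Nin{t,\xx^k_t}\to\Nin{t,\xx_t}$ for each $t$ by dominated convergence. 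Measurability of $t\mapsto f(t,\xx(t-\tau_{0:\ell}),\xx_t)$ then follows as an a.e.\ limit of measurable maps.
\end{enumerate}

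You instead prove continuity of $\nu$ directly for every bounded measurable $\phi$, via continuity of translation in $L^1$ with the truncated weight $\hat L_{R,T}$. This gives a stronger statement (continuity rather than mere measurability) and avoids constructing continuous approximants with a prescribed value at $0$. The cost is the translation/truncation argument.

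Your $\epsilon$-net step is needed only because of the order of your triangle inequality. Compare $G(t_k,s,\xx(t_k+s))$ with $G(t_k,s,\xx(t+s))$ via the Lipschitz bound, then $G(t_k,s,\xx(t+s))$ with $G(t,s,\xx(t+s))$. The time-difference term then has a fixed third argument, exactly as in the paper.

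Finally, if $\phi\in\L^\infty$ is not pointwise bounded, replace it by a bounded representative before choosing $R$; this does not change $\nu$.
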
 
The most intricate part of the proof, given in Appendix~\ref{app:ex:ass2}, is to ensure the measurability requirement of item~\ref{item:ftt-meas}) of Assumption~\ref{as:Lip-pt}.


\subsection{Equivalent Initial Conditions}
\label{sec:equiv-init-cond}

From an applications' standpoint, it is reasonable to request that mathematical models representing real systems involving time delays should have the following property: initial conditions $\phi \in \L^\infty$ that differ on sets of measure zero [with the exception of the value $\phi(0)$, recall~(\ref{eq:td-gen}) and~(\ref{eq:td-gen-sol})] should give rise to the same solution. Different initial conditions that generate the same solution can conceptually be regarded as ``equivalent''. Sufficient conditions for the latter to hold are next provided. The following lemma is straightforward.
\begin{lem}
  \label{lem:equiv-ass}
  Consider $f : \R_{\ge 0} \times \R^{(\ell+1)n} \times L^\infty \to \R^n$ and $\f$ of the form~(\ref{eq:f-pt-dist-p})--(\ref{eq:f-pt-dist}). Then, Assumption~\ref{as:Lip-pt} implies Assumption~\ref{as:Lip}.
\end{lem}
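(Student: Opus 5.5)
The plan is to verify the three items of Assumption~\ref{as:Lip} one by one for $\f(t,\phi)=f(t,(\phi(-\tau_0),\ldots,\phi(-\tau_\ell)),\bar\phi)$, where $\tau_0=0$ and $\bar\phi$ is the class of $\phi$. The key elementary facts are two inequalities valid for every $\phi\in\L^\infty$. The first is $|\phi(-\tau_i)|\le\|\phi\|_\infty$ for each $i$, so that $|(\phi(-\tau_{0:\ell}))|\le\|\phi\|_\infty$ in the infinity norm. The second is $\|\bar\phi\|=\|\phi\|\le\|\phi\|_\infty$. Consequently, $\phi\in B_R^{\L^\infty}$ implies both $(\phi(-\tau_{0:\ell}))\in B_R^{(\ell+1)n}$ and $\bar\phi\in B_R^{L^\infty}$.

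For item~\ref{item:fphicont}), fix $R$ and $T$, where we may assume $R>0$ (the case $R=0$ is trivial, since then $\phi^1=\phi^2=\zero$ pointwise). Take $L(R,T)$ and $Z$ from Assumption~\ref{as:Lip-pt}\ref{item:fLip-pt-dist}). For $\phi^1,\phi^2\in B_R^{\L^\infty}$ and $t\in[0,T]\setminus Z$, that assumption gives
\[
|\f(t,\phi^1)-\f(t,\phi^2)|\le L\big(|\phi^1(-\tau_{0:\ell})-\phi^2(-\tau_{0:\ell})|\vee\|\phi^1-\phi^2\|\big).
\]
By the two facts above applied to $\phi^1-\phi^2$, both quantities inside the maximum are at most $\|\phi^1-\phi^2\|_\infty$, which yields the required bound with the same $L$ and $Z$.

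For item~\ref{item:f0bound}), observe that $\f(t,\zero)=f(t,0,\zero)$, which is locally essentially bounded by Assumption~\ref{as:Lip-pt}\ref{item:f0bnd-pt-dist}). For item~\ref{item:ftmeas}), let $\xx=\phi\conc{t_0}x$. Then $(\xx_t)(-\tau_i)=\xx(t-\tau_i)$, and the class of $\xx_t$ is exactly the argument used in Assumption~\ref{as:Lip-pt}\ref{item:ftt-meas}). Hence $t\mapsto\f(t,\xx_t)$ coincides with the function that assumption declares measurable.

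No step is a genuine obstacle. The only point requiring care is item~\ref{item:fphicont}): the pointwise evaluations $\phi(-\tau_i)$ are not controlled by the essential supremum. This is precisely why Assumption~\ref{as:Lip} uses $\|\cdot\|_\infty$, and the first fact above handles it.
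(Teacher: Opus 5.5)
Your proof is correct and is exactly the item-by-item verification the paper leaves implicit (the paper only calls the lemma ``straightforward'' and gives no proof). The two bounds $|\phi(-\tau_i)|\le\|\phi\|_\infty$ and $\|\phi\|\le\|\phi\|_\infty$ transfer item~i) of Assumption~\ref{as:Lip-pt} to Assumption~\ref{as:Lip} with the same $L$ and $Z$, items~ii) and~iii) are literal rewrites, and your separate handling of $R=0$ is fine (using the constant for $R=1$, since $B_0^{\L^\infty}\subset B_1^{\L^\infty}$, also works).
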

For functions $f$ that satisfy Assumption~\ref{as:Lip-pt}, the solutions of system~(\ref{eq:system-pt-dist}) exist and are unique, according to Lemma~\ref{lem:equiv-ass} and Proposition~\ref{prop:exist-uniq}.
The following result shows that under Assumption~\ref{as:Lip-pt} the solutions of~(\ref{eq:system-pt-dist}) are in addition insensitive to initial conditions $\phi\in\L^\infty$ that differ on sets of measure zero, provided that the value $\phi(0)$ is the same.
\begin{prop}
  \label{prop:insensitive}
  Let $f : \R_{\ge 0} \times \R^{(\ell+1)n} \times L^\infty \to \R^n$ satisfy Assumption~\ref{as:Lip-pt}. Consider the system~(\ref{eq:system-pt-dist}) with $t_0 \ge 0$ and two bounded initial conditions $\phi^1,\phi^2\in\L^\infty$. Let $x^1 : [t_0,T_1) \to \R^n$, $x^2 : [t_0,T_2) \to \R^n$ be the corresponding maximally defined solutions. If $\|\phi^1 - \phi^2\| = 0$ and $\phi^1(0)=\phi^2(0)$, then $x^1 \equiv x^2$.
\end{prop}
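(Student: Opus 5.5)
The plan is to show that $x^1$, restricted to its domain, is a solution for the initial condition $\phi^2$. The uniqueness part of Proposition~\ref{prop:exist-uniq}, available through Lemma~\ref{lem:equiv-ass}, then gives $x^1\equiv x^2$. Write $\xx^i=(\phi^i\conc{t_0}x^i)$. The difficulty is that $\phi^1$ and $\phi^2$ may differ at the discrete delay points, so $\xx^1(t-\tau_j)$ and $\tilde\xx(t-\tau_j)$, with $\tilde\xx:=(\phi^2\conc{t_0}x^1)$, may differ for particular $t$. The key observation is that this can only happen on a null set of times.

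Let $E:=\{s\in[-\md,0): \phi^1(s)\neq\phi^2(s)\}$. Since $\|\phi^1-\phi^2\|=0$, $E$ has measure zero, and $\phi^1(0)=\phi^2(0)$ means the point $0$ does not matter. Fix $t\in[t_0,T_1)$ and consider the history functions $\xx^1_t$ and $\tilde\xx_t$. They differ only at points $s$ with $t+s-t_0\in E$, a translate of $E$, which is again a null set. Hence $\|\xx^1_t-\tilde\xx_t\|=0$, so they define the same element of $L^\infty$ and the distributed argument of $f$ coincides for every $t$.

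For the discrete arguments, and each $j=1,\ldots,\ell$, we have $\xx^1(t-\tau_j)\neq\tilde\xx(t-\tau_j)$ only if $t-\tau_j<t_0$ and $t-\tau_j-t_0\in E$, that is, only if $t\in E+t_0+\tau_j$. The index $j=0$ never contributes, because $x^1(t)$ is used in both. Therefore the set
\[
N:=\bigcup_{j=1}^{\ell}(E+t_0+\tau_j)
\]
has measure zero, and for all $t\in[t_0,T_1)\setminus N$ we get
\[
f(t,\xx^1(t-\tau_{0:\ell}),\xx^1_t)=f(t,\tilde\xx(t-\tau_{0:\ell}),\tilde\xx_t).
\]

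The two integrands in \eqref{eq:td-gen-sol} thus agree almost everywhere. The left one is measurable and integrable because $x^1$ is a solution. The right one is measurable by Assumption~\ref{as:Lip-pt}\ref{item:ftt-meas}) applied with $\phi^2$ and the continuous function $x^1$; in any case, almost-everywhere equality with a measurable function is enough. So their integrals agree. Since $\phi^1(0)=\phi^2(0)$, $x^1$ satisfies \eqref{eq:td-gen-sol} with $\phi^2$ on $[t_0,T_1)$.

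By uniqueness of the maximally defined solution, $x^1$ is a restriction of $x^2$, which gives $T_1\le T_2$. The symmetric argument gives $T_2\le T_1$, so $x^1\equiv x^2$. The only delicate point is the bookkeeping of the null sets for the pointwise delays; there is no real obstacle beyond that.
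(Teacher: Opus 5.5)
Your proof is correct, but it takes a different route from the paper. The paper compares the two maximal solutions directly. With $z=x^1-x^2$, it bounds $|z(t)|$ by integrating the Lipschitz estimate of Assumption~\ref{as:Lip-pt}\ref{item:fLip-pt-dist}). It then notes that $\|\xx^1_s-\xx^2_s\|$ and, for almost all $s$, the discrete-delay differences are controlled by $\max_{t_0\le r\le s}|z(r)|$, and closes with Gronwall. In effect it re-runs the uniqueness estimate of Proposition~\ref{prop:exist-uniq} with an extra null set.

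You instead fix the single function $x^1$ and compare the two prehistories along it, showing that the integrands agree for almost every $t$. The Lipschitz hypothesis then enters only through the already-proved uniqueness. This isolates what actually drives the result: property \eqref{eq:zerom-same} plus the fact that translates of a null set are null. It also gives the stronger statement that the Cauchy problems for $\phi^1$ and $\phi^2$ have exactly the same solutions, so $T_1=T_2$ is immediate.

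The paper's estimate only yields agreement on $[t_0,\min\{T_1,T_2\})$. Its closing appeal to maximality tacitly needs either your observation (that $x^2$ also solves the $\phi^1$ problem) or the blow-up alternative in Proposition~\ref{prop:exist-uniq}.

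One step you should state explicitly is why a solution of the $\phi^2$ problem on $[t_0,T_1)$ must be a restriction of $x^2$. Proposition~\ref{prop:exist-uniq} as stated only asserts uniqueness of the maximal solution. However, its Gronwall argument applies verbatim to any two solutions on their common interval, and together with the maximality of $x^2$ this gives $T_1\le T_2$.
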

\begin{proof}
  Let $z = x^1 - x^2$, $\xx^1 = (\phi^1\conc{t_0}x^1)$, $\xx^2 = (\phi^2\conc{t_0}x^2)$. Let $T = \min\{T_1,T_2\}$, $T' \in (t_0,T)$, 
  \begin{align*}
    R = \max_{i=1,2}\left\{\max\left\{\|\phi^i\|_\infty, \max_{t_0 \le t \le T'} |x^i(t)|\right\} \right\}
  \end{align*}
  and $L$ be in correspondence with $R$ and $T$, according to Assumption~\ref{as:Lip-pt}\ref{item:fLip-pt-dist}).
  Then, for $t\in [t_0,T']$,
  \begin{align*}
    &|z(t)| \le\\
    &\int_{t_0}^t \left| f(s,\xx^1(s-\tau_{0:\ell}),\xx^1_s) - f(s,\xx^2(s-\tau_{0:\ell}),\xx^2_s)\right| ds \\
    &\le \int_{t_0}^t L \left(\max_{i=0,\ldots,\ell} |\xx^1(s-\tau_i) - \xx^2(s-\tau_i)| \vee \| \xx^1_s - \xx^2_s \|\right) ds
  \end{align*}
  Since $\|\phi^1 - \phi^2 \| = 0$, we have, for all $s\in [t_0,t]$,
  \begin{align*}
    &\|\xx^1_s - \xx^2_s\| = \max_{t_0 \le r \le s} |z(r)|\\
    \intertext{and, for all $i=0,\ldots,\ell$ and almost all $s \in [t_0,t]$,}
    |\xx^1(s-\tau_i) &- \xx^2(s-\tau_i)| \le \max_{t_0 \le r \le s} |z(r)| 
  \end{align*}
  Defining $g(t) = \max_{t_0 \le r \le t} |z(r)|$, it follows that
  \begin{align}
  \label{eq:g-delta-int-2}
  g(t)\le \int_{t_0}^t L\: g(s) \:ds\quad \text{for all}\;t \in [t_0,T'].
\end{align}
  From Gronwall inequality it follows that $0 = g(t) \ge |z(t)| = |x^1(t)-x^2(t)|$ for all $t \in [t_0,T']$. Since $T'$ can be made arbitrarily close to $T$, this means that $x^1(t) = x^2(t)$ for all $t\in [t_0,T)$. Then, $T_1 = T_2$ or otherwise either $x^1$ or $x^2$ would not be maximally defined.
\end{proof}


Proposition~\ref{prop:exist-uniq} ensures existence and uniqueness of solutions only for bounded initial conditions $\phi\in\L^\infty$, meaning that the pointwise supremum $\|\phi\|_\infty$ should be finite. For a system of the form~(\ref{eq:system-pt-dist}) where $f$ satisfies Assumption~\ref{as:Lip-pt}, Proposition~\ref{prop:insensitive} establishes the equivalence, in terms of the generated solutions, of initial conditions that differ on sets of measure zero while coinciding at $0$. Therefore, if $\phi\in\L^\infty$ is unbounded, one may define the solution generated by such initial condition as the solution generated by any bounded $\psi\in\L^\infty$ that differs from $\phi$ on a set of measure zero and satisfies $\psi(0)=\phi(0)$. This $\psi$ always exists because the elements of $\L^\infty$ are, by definition, essentially bounded. In the sequel, the following system will hence be considered
\begin{subequations}
  \label{eq:system-pt-dist-ess}
  \begin{align}
    &\dot x(t) = f(t,\xx(t-\tau_{0:\ell}),\xx_t),\quad \xx = (\phi\conc{t_0}x), \\
    &\xx(t-\tau_{0:\ell}) = (\xx(t-\tau_0),\ldots,\xx(t-\tau_\ell)),\quad
    \tau_0 = 0,\\
    &x(t_0) = \xi_0 \in \R^n,\quad
    \phi \in L^\infty,
  \end{align}
\end{subequations}
defined for $t\ge t_0 \ge 0$, where $f$ satisfies Assumption~\ref{as:Lip-pt}. A maximal solution $x:[t_0,T_x)\to \R^ n$ of (\ref{eq:system-pt-dist-ess}) is a maximal solution of (\ref{eq:system-pt-dist}) corresponding to any initial condition $\psi\in \L^{\infty}$ that is a representative of $\phi$ and satisfies $\psi(0)=\xi_0$. The difference with respect to~(\ref{eq:system-pt-dist}) is the fact that the initial condition $\phi\in\L^\infty$ of~(\ref{eq:system-pt-dist}) is replaced by the pair $(\xi_0,\phi)\in \R^n \times L^\infty$. We say that the initial condition $(\xi_0,\phi)$ is continuous if there exists $\psi\in \C$ such that $\psi$ is a representative of $\phi$ and $\psi(0)=\xi_0$. In such a case, the function $\psi \in \C$ is unique, we identify $(\xi_0,\phi)$ with such $\psi$ and write $(\xi_0,\phi)\in \C$. According to the preceding results, under Assumption~\ref{as:Lip-pt} the system~(\ref{eq:system-pt-dist-ess}), which combines discrete and distributed delays, is well-defined and its solutions exist and are unique. Let $x(\cdot;t_0,\xi_0,\phi)$ and $[t_0,T_{(t_0,\xi_0,\phi)})$ denote the unique solution of (\ref{eq:system-pt-dist-ess}) and its maximal interval of definition, respectively. The following property of the solutions can be straightforwardly proved.
\begin{prop} \label{prop:semigroup}
Let $f$ in (\ref{eq:system-pt-dist-ess}) satisfy Assumption \ref{as:Lip-pt}. Let  $t_0\ge 0$, $\xi_0\in \R^n$ and $\phi \in L^{\infty}$, and let $t_1\in [t_0,T_{(t_0,\xi_0,\phi)})$. Then, if $x(\cdot)=x(\cdot;t_0,\xi_0,\phi)$, $x_1=x(t_1)$ and $\phi_1=(\phi\conc{t_0}x)_{t_1}$, $T_{(t_1,\xi_1,\phi_1)}=T_{(t_0,\xi_0,\phi)}$ and $x(t;t_1,\xi_1,\phi_1)=x(t)$ for all $t\in [t_1,T_{(t_1,\xi_0,\phi)})$
\end{prop}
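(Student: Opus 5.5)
The plan is to show that the time-shifted function $y := x|_{[t_1,T_{(t_0,\xi_0,\phi)})}$ is a solution of \eqref{eq:system-pt-dist-ess} from initial time $t_1$ and initial condition $(\xi_1,\phi_1)=(x(t_1),(\phi\conc{t_0}x)_{t_1})$. Maximality then follows by comparing extensions, and uniqueness comes from Proposition~\ref{prop:exist-uniq} via Lemma~\ref{lem:equiv-ass}.

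\textbf{Step 1 (representatives).} Fix a bounded representative $\psi\in\L^\infty$ of $\phi$ with $\psi(0)=\xi_0$, and set $\xx=(\psi\conc{t_0}x)$. By Proposition~\ref{prop:insensitive}, $x$ is then the maximal solution of \eqref{eq:system-pt-dist} with initial condition $\psi$. Let $\psi_1:=\xx_{t_1}$; this is bounded because $x$ is continuous on $[t_0,t_1]$. It is a representative of $\phi_1$, and $\psi_1(0)=x(t_1)=\xi_1$.

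\textbf{Step 2 (concatenation identity).} The key observation is that $(\psi_1\conc{t_1}y)(t)=\xx(t)$ for all $t\in[t_1-\md,T)$. The proof is casewise from \eqref{eq:conc-def}:
\begin{itemize}
\item for $t\in[t_1-\md,t_1)$, we have $\psi_1(t-t_1)=\xx(t)$;
\item for $t\ge t_1$, both sides equal $x(t)$.
\end{itemize}
Consequently, for $s\ge t_1$ the delayed arguments $\xx(s-\tau_{0:\ell})$ and the history $\xx_s$ coincide with those built from $(\psi_1\conc{t_1}y)$.

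\textbf{Step 3 (integral equation).} Subtract the integral equation \eqref{eq:td-gen-sol} at time $t_1$ from the same equation at time $t\ge t_1$. This gives
\[
y(t)=\xi_1+\int_{t_1}^t f(s,\xx(s-\tau_{0:\ell}),\xx_s)\,ds .
\]
By Step 2 the integrand is exactly the one for initial data $\psi_1$ at time $t_1$. Hence $y$ is a solution from $(t_1,\xi_1,\phi_1)$.

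\textbf{Step 4 (maximality).} By uniqueness, $y$ is the restriction of the maximal solution $x(\cdot;t_1,\xi_1,\phi_1)$, so $T_{(t_1,\xi_1,\phi_1)}\ge T_{(t_0,\xi_0,\phi)}$. For the converse inequality, let $w$ be the maximal solution from $(t_1,\xi_1,\phi_1)$, and define $\tilde x=x$ on $[t_0,t_1]$ and $\tilde x=w$ on $[t_1,T_{(t_1,\xi_1,\phi_1)})$. Reversing the computation, using Step 2 with $w$ in place of $y$ and adding the two integrals, shows that $\tilde x$ is a solution from $(t_0,\xi_0,\phi)$. Maximality of $x$ then forces equality of the endpoints. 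This also gives $x(t;t_1,\xi_1,\phi_1)=x(t)$ on the common interval.

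\textbf{Main obstacle.} There is no real obstacle; the only delicate point is bookkeeping. One must check that the representative used for $\phi_1$ is $\xx_{t_1}$ itself, which is a genuine function rather than merely an equivalence class. This ensures that the pointwise delayed values $\xx(s-\tau_i)$ agree exactly, not just almost everywhere. Proposition~\ref{prop:insensitive} makes the final statement independent of the chosen representative anyway.
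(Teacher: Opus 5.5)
Your argument is correct and is the standard restriction-and-gluing proof the paper has in mind when it says this property ``can be straightforwardly proved'' (no proof is given there). The steps are: take the bounded representative $\psi_1=\xx_{t_1}$, use the identity $(\psi_1\conc{t_1}y)=\xx$ on $[t_1-\md,T)$, split the integral at $t_1$, and glue $x|_{[t_0,t_1]}$ with the maximal solution from $t_1$ to get the reverse inequality on existence times. You also read the statement's typos correctly, namely $\xi_1=x(t_1)$ in place of $x_1$, and $T_{(t_0,\xi_0,\phi)}$ in place of $T_{(t_1,\xi_0,\phi)}$ in the final interval.
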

\begin{rem}
    The system models considered in \cite{cookap_mb76,willeg_siamjma82,guolak_jmaa88} are all of the form~\eqref{eq:sys-exa1}. By the above discussion, the corresponding solutions are well-defined for not only continuous initial conditions but also more general ones in $\R^n \times L^\infty$. By contrast, the model given by \cite[eq.~(2.1)]{kertes_amh90} is of the form~\eqref{eq:td-gen} and, under the assumptions given therein, cannot be ensured to depend on a mix of discrete and distributed delays in the sense considered here, even though the appearance of the equations would suggest so. The model in \cite[eq.~(1)]{aleefi_scl24} also is of the form~\eqref{eq:sys-exa1} but does not necessarily satisfy the Lipschitz continuity requirement of Proposition~\ref{prop:exa:ass2}~\ref{item:Lips-G}).
    \mex
\end{rem}

\section{Forward Completeness and Bounded Reachability Sets}
\label{sec:FCandBRS}

The main aim of this section is to establish sufficient conditions that guarantee the equivalence between the existence of solutions over compact time intervals and the boundedness of the reachability sets, \emph{ i.e.} the uniform  boundedness of the solutions, from bounded sets of initial conditions and over such time intervals. Toward this aim, the concept of weak-* convergence in $L^\infty$ will be fundamental.
\begin{defin}
    \label{def:brs}
System (\ref{eq:system-pt-dist-ess}), with $f$ satisfying Assumption~\ref{as:Lip-pt}, is said to have bounded reachability sets (BRS) if for every initial time $t_0 \ge 0$, the solutions of (\ref{eq:system-pt-dist-ess}) are defined for all $t\ge t_0$, \emph{i.e.} the system is forward complete, and
for all $0\le t_0<T$ and all $R>0$,
 \begin{align}
    \label{eq:brs}
    \sup_{t \in [t_0,T], \xi \in B_R^n, \phi \in B_R^{L^\infty}} |x(t;t_0,\xi,\phi)| < \infty.
  \end{align}
\end{defin}

\subsection{Weak-* Topology}
\label{sec:weak-star}

\begin{defin}
  \label{def:weakstar-conv}
  A sequence $\{\phi^k\}$ of elements of $L^\infty$ 
  is said to converge weakly-* to $\phi^0 \in L^\infty$ if for every $g\in L^1$, 
  \begin{align*}
    \lim_{k\to\infty} \int_{-\md}^0 g(s)^T \phi^k(s) ds = \int_{-\md}^0 g(s)^T \phi^0(s) ds
  \end{align*}
  The notation $\phi^k \wto \phi^0$ will be used as equivalent for $\{\phi^k\}$ converges weakly-* to $\phi^0$.
\end{defin}
The following lemma will be repeatedly used.
\begin{lem}
  \label{lem:weakfacts}
  The following facts are true:
  \begin{itemize}
  \item If $\phi^k \wto \phi^0$, then $\sup_{k\in\N} \|\phi^k\| < \infty$ and $\|\phi^0\| \le \liminf_{k\to\infty} \|\phi^k\|$.
  \item Every bounded sequence in $L^\infty$ contains a weakly-* convergent subsequence.
  \end{itemize}
\end{lem}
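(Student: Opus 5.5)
Both facts are standard consequences of the duality $L^\infty = (L^1)^*$, and I would prove them by passing through this identification. The pairing $\langle g,\phi\rangle = \int_{-\md}^0 g(s)^T\phi(s)\,ds$ identifies each $\phi\in L^\infty$ with a bounded linear functional on $L^1$. Since $|x|$ is the infinity norm and $\|g\|_1$ sums the $L^1$ norms of the components, the dual norm is exactly the essential supremum $\|\phi\|$ used in the paper. To check this, note that $|\langle g,\phi\rangle| \le \|g\|_1\|\phi\|$. For the reverse inequality, pick a component $i$ and a set $E$ of positive measure on which $|\phi_i| > \|\phi\|-\varepsilon$, and test against $g = e_i\,\mathrm{sgn}(\phi_i)\mathbf 1_E/|E|$. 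Under this identification, weak-* convergence in Definition~\ref{def:weakstar-conv} is precisely weak-* convergence of functionals on $L^1$.

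For the first bullet, I would apply the uniform boundedness principle (Banach--Steinhaus) on the Banach space $L^1$. If $\phi^k\wto\phi^0$, then for every $g\in L^1$ the scalar sequence $\langle g,\phi^k\rangle$ converges, and is therefore bounded. The family $\{\phi^k\}$ is thus pointwise bounded on $L^1$, so $\sup_k\|\phi^k\|<\infty$. For the norm inequality, fix $g$ with $\|g\|_1\le 1$. Then
\[
|\langle g,\phi^0\rangle| = \lim_k |\langle g,\phi^k\rangle| \le \liminf_k \|\phi^k\|,
\]
and taking the supremum over such $g$, using the norm identification above, gives $\|\phi^0\|\le\liminf_k\|\phi^k\|$.

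For the second bullet, I would use that $L^1([-\md,0],\R^n)$ is separable, for instance because step functions with rational values and rational breakpoints are dense. By the Banach--Alaoglu theorem, closed balls of the dual of a separable space are weak-* sequentially compact. To avoid invoking metrizability, I would instead give the direct diagonal argument. Fix a countable dense set $\{g_j\}\subset L^1$ and a bounded sequence with $\|\phi^k\|\le R$. For each $j$ the scalars $\langle g_j,\phi^k\rangle$ lie in $[-R\|g_j\|_1,R\|g_j\|_1]$. Successive extraction plus a diagonal selection then yields a subsequence $\phi^{k_m}$ along which $\langle g_j,\phi^{k_m}\rangle$ converges for every $j$.

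The main obstacle is the final step of this diagonal argument: extending convergence from the dense set to all of $L^1$ and identifying the limit as an element of $L^\infty$. A $3\varepsilon$ argument using the uniform bound $R$ shows that $\Lambda(g) := \lim_m\langle g,\phi^{k_m}\rangle$ exists for every $g\in L^1$. The resulting functional $\Lambda$ is linear with $|\Lambda(g)|\le R\|g\|_1$. The Riesz representation $(L^1)^* = L^\infty$ on the finite measure space $[-\md,0]$, applied componentwise, then yields $\phi^0\in L^\infty$ with $\Lambda(g) = \langle g,\phi^0\rangle$. Hence $\phi^{k_m}\wto\phi^0$.
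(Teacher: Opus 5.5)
Your proposal is correct and takes essentially the same route as the paper. The paper simply cites Brezis: Proposition~3.13(iii) for the first item, whose proof there is the uniform boundedness argument you give, and Banach--Alaoglu with the separable-case sequential compactness (Corollary~3.30) for the second, relying on $L^\infty=(L^1)^*$ and the separability of $L^1$. You unfold these cited results, and you also check that under the pairing $\int g^T\phi$ the dual norm equals $\|\phi\|$ for the paper's choice of norms on $\R^n$ and $L^1$, which the paper takes for granted.
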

\begin{proof}
  The proof of the first item follows from item (iii) of Proposition 3.13 of~\cite{brezis_book11}. The second item is a consequence of the compactness of any closed ball in the weak-* topology, as established by the Banach-Alaoglu-Bourbaki theorem \cite[Theorem~3.16]{brezis_book11} (see also \cite[Corollary~3.30]{brezis_book11}). 
\end{proof}

Since $L^\infty$ 
is the dual space of $L^1$ 
\cite[Remark~5, p.~99]{brezis_book11}, and the latter is a separable space \cite[Theorem~4.13]{brezis_book11}, it follows that within any bounded ball of $L^\infty$, the weak-* topology is induced by a metric. Specifically, separability implies that there exists a countable set of functions $\{g^i : i\in\N\}\subset B_1^{L^1}$, 
that is dense in $B_1^{L^1}$. Let $R>0$ and consider the closed ball $B_R^{L^\infty}$. Following the proof of Theorem~3.28 of \cite{brezis_book11}, 
the following is a norm in $L^\infty$:
\begin{align}
  \label{eq:normstar}
  \norm{\phi} := \sum_{i=1}^\infty \frac{1}{2^i} \left| \int_{-\md}^0 g^i(s)^T \phi(s) ds \right|
\end{align}
For each $R>0$, the metric $d : B_R^{L^\infty} \times B_R^{L^\infty} \to \R_{\ge 0}$
\begin{align}
  \label{eq:def:metric}
  d(\phi,\psi) = \norm{\phi-\psi}
\end{align}
induces the weak-* topology on $B_R^{L^{\infty}}$. 
\begin{rem}
  If a sequence $\{\phi^k\}$ is bounded, 
  then $\phi^k \wto \phi^0 \in L^\infty$ if and only if
  \begin{align*}
    \lim_{k\to\infty} \norm{\phi^k - \phi^0} = \lim_{k\to\infty} d(\phi^k,\phi^0) = 0
  \end{align*}
\end{rem}

The next 
result establishes that when every function in a sequence $\{\phi^k\}$ is concatenated with the same bounded function, then weak-* convergence is uniform over time shifts.
\begin{lem}
  \label{lem:unifdist}
  Let $t_0 \ge 0$, $T>t_0$, $\phi^k \wto \phi^0 \in L^\infty$, $x:[t_0,T] \to \R^n$ measurable and essentially bounded, and $$R\ge \max\left\{ \esssup_{t\in [t_0,T]} |x(t)|, \sup_{k\in\N} \|\phi^k\| \right\}.$$
  Then, for every $\varepsilon > 0$ there exists $w\in\N$ such that
  \begin{align}
    \label{eq:unifdist}
    q \in [0,T-t_0],\ k\ge w \quad \Rightarrow \quad 
    \norm{\xx^k_{t_0+q} - \xx^0_{t_0+q}} < \varepsilon.
  \end{align}
  where $\xx^k = (\phi^k \conc{t_0} x)$.
\end{lem}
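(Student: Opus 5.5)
The plan is to reduce the shifted functions to a single integral and then use a compactness argument in the shift parameter. Write $\zz^k := \xx^k_{t_0+q} - \xx^0_{t_0+q}$. For $s\in[-\md,0]$ with $t_0+q+s\ge t_0$ both concatenations equal $x(t_0+q+s)$, so $\zz^k(s)=0$ there. For $s<-q$ we get $\zz^k(s) = \phi^k(q+s)-\phi^0(q+s)$. Hence for every $g\in L^1$,
\begin{align*}
\int_{-\md}^0 g(s)^T \zz^k(s)\,ds = \int_{-\md}^{0} \mathbf{1}_{[-\md,-q)}(r-q)\, g(r-q)^T\big(\phi^k(r)-\phi^0(r)\big)\,dr ,
\end{align*}
where we put $r=q+s$, extend $g$ by zero outside $[-\md,0]$, and note the integrand vanishes unless $r\in[q-\md,0)$. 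Call $h_q(r) := g(r-q)\mathbf{1}_{\{r-q\ge -\md\}}$; then $h_q\in L^1$ with $\|h_q\|_1\le\|g\|_1$. For $q\ge\md$ the term is identically zero. The function $x$ drops out entirely, and $R$ only guarantees that all $\xx^k_{t}$ lie in $B_R^{L^\infty}$.

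The key observation is that $q\mapsto h_q$ is continuous from $[0,T-t_0]$ into $L^1$. This is continuity of translation in $L^1(\R)$ applied to $g$ extended by zero. Now fix $\varepsilon>0$. Since $\sup_k\|\phi^k-\phi^0\|\le 2R$ (using Lemma~\ref{lem:weakfacts} for $\phi^0$), the tail of \eqref{eq:normstar} satisfies $\sum_{i>N}2^{-i}|\int g^{iT}\zz^k|\le 2R\,2^{-N}$, since $\|g^i\|_1\le1$. Choose $N$ so this is below $\varepsilon/2$. It then suffices to show, for each of the finitely many $g=g^i$ with $i\le N$, that
\begin{align*}
\sup_{q\in[0,T-t_0]}\Big|\int_{-\md}^0 h_q(r)^T(\phi^k(r)-\phi^0(r))\,dr\Big|\to 0 \quad (k\to\infty).
\end{align*}

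This uniformity in $q$ is the main step. We argue by compactness. The set $K=\{h_q: q\in[0,T-t_0]\}$ is compact in $L^1$, being the continuous image of a compact interval. Given $\delta>0$, cover $K$ by finitely many $L^1$-balls of radius $\delta$ centered at $h_{q_1},\dots,h_{q_m}$. For each $j$, weak-* convergence gives $w_j$ with $|\int h_{q_j}^T(\phi^k-\phi^0)|<\delta$ for $k\ge w_j$. Then for any $q$, choosing $j$ with $\|h_q-h_{q_j}\|_1<\delta$, we get
\begin{align*}
\Big|\int h_q^T(\phi^k-\phi^0)\Big|\le \delta + 2R\delta \quad \text{for all } k\ge \max_j w_j .
\end{align*}
This uses $|\int h^T\psi|\le \|h\|_1\|\psi\|$ for the infinity norm on $\R^n$ paired with the sum-of-components $L^1$ norm. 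Finally, pick $\delta$ small enough that $\sum_{i\le N}2^{-i}(1+2R)\delta<\varepsilon/2$, and take $w$ to be the maximum of the finitely many thresholds. This yields \eqref{eq:unifdist}.
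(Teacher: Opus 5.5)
Your proof is correct and follows essentially the paper's route: you rewrite each term via the translated test functions $g^i_q$, truncate the tail of the series defining $\norm{\cdot}$ using $\|\phi^k-\phi^0\|\le 2R$, and rely on $L^1$-continuity of translation in $q$. The only difference is cosmetic: you use a finite $\delta$-net of the compact set $\{h_q\}$ directly, whereas the paper argues by contradiction along a sequence $q_r\to q_0$ and compares against the single limit shift $q_0$.
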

\begin{proof}
      If $q > \md$, then $\xx^k_{t_0+q} = \xx^0_{t_0+q} = x_{t_0+q}$, so it is sufficient to consider the case $q \in [0,\md]$. For each $i\in\N$ and $q \in [0,\md]$, define the functions $g^i_q \in L^1$ as
    \begin{align*}
      g^i_q(s) &= 
      \begin{cases}
        g^i(s-q) &\text{if }s \in [-\md+q,0],\\
        0 &\text{otherwise,}
      \end{cases}
    \end{align*}
    with $g^i$ as in~(\ref{eq:normstar}), so that $\|g^i_q\|_1 \le \|g^i\|_1 \le 1$ and
    \begin{align*}
      \lefteqn{\norm{\xx^k_{t_0+q} - \xx^0_{t_0+q}} = \norm{(\phi^k\conc{t_0} x)_{t_0+q} - (\phi^0\conc{t_0} x)_{t_0+q}}}\hspace{10mm}\\
      &= \sum_{i=1}^\infty \frac{1}{2^i} \left| \int_{-\md}^{-q} g^i(s)^T[\phi^k(s+q)-\phi^0(s+q)] ds \right|\\
      &= \sum_{i=1}^\infty \frac{1}{2^i} \left| \int_{-\md}^{0} g^i_q(s)^T [\phi^k(s)-\phi^0(s)] ds \right|.
    \end{align*}
    For a contradiction, suppose that there exists $\varepsilon > 0$ and sequences $\{k_r\}$, $\{q_{r}\}$ with $\{k_r\}$ increasing, $q_{r} \in [0,\min\{\md,T-t_0\}]$ such that $q_r \to q_0 \in [0,\min\{\md,T-t_0\}]$, and 
    \begin{align}
      \label{eq:eps2c}
      \norm{\xx^{k_r}_{t_0+q_{r}} - \xx^0_{t_0+q_{r}}} \ge \varepsilon.
    \end{align}
    Since $\|g^i_q\|_1 \le 1$ for all $q\in [0,\md]$ and $\|\phi^k-\phi^0\| \le 2R$ by the triangle inequality and Lemma~\ref{lem:weakfacts}, there exists $j\in\N$ such that for all $k\in\N$ and $q\in [0,\min\{\md,T-t_0\}]$,
    \begin{align*}
      \sum_{i=j+1}^\infty \frac{1}{2^i} \left| \int_{-\md}^{0} g^i_q(s)^T [\phi^k(s)-\phi^0(s)] ds \right| < \varepsilon/3
    \end{align*}
    Since $\phi^k \wto \phi^0$, there exists $p \in \N$ such that for all $k\ge p$,
    \begin{align*}
      \left| \int_{-\md}^{0} g^i_{q_0}(s)^T[\phi^k(s)-\phi^0(s)] ds \right| < \varepsilon/3
    \end{align*}
    for $i\in \{1,\ldots,j\}$.
    In addition, there exists $r_1 \in \N$ such that
    \begin{align*}
      \left| \int_{-\md}^{0} [g^i_{q_r}(s) - g^i_{q_0}(s)]^T [\phi^k(s)-\phi^0(s)] ds \right| < \varepsilon/3
    \end{align*}
    holds for all $r\ge r_1$, $i\in \{1,\ldots,j\}$, $k\in\N$.
    The latter is a consequence of $\lim_{r\to\infty} \|g^i_{q_r}(s) - g^i_{q_0}(s) \|_1 = 0$ for every $i\in\N$ and of $\|\phi^k-\phi^0\| \le 2R$. The following must then hold
    \begin{align*}
      \lefteqn{\norm{(\phi^k\conc{t_0} x)_{t_0+q} - (\phi^0\conc{t_0} x)_{t_0+q}} =}\hspace{5mm}\\
      &\sum_{i=1}^j \frac{1}{2^i} \left| \int_{-\md}^{0} [g^i_q(s)-g^i_{q_0}(s)+g^i_{q_0}(s)]^T [\phi^k(s)-\phi^0(s)] ds \right|\\
      &  + \sum_{i=j+1}^\infty \frac{1}{2^i} \left| \int_{-\md}^{0} g^i_q(s)^T[\phi^k(s)-\phi^0(s)] ds \right|
      < \varepsilon
    \end{align*}
    for some $q=q_r$, $k=k_r$, which contradicts~(\ref{eq:eps2c}).
\end{proof}

\subsection{Uniform Convergence of Solutions}

With the future aim of analyzing the behavior of solution sets for systems with inputs, families of time-varying systems without inputs are considered. A family of systems is defined not by a single function $f$ but by an \emph{arbitrary} set $\F$ of functions $f : \R_{\ge 0} \times \R^{(\ell+1)n} \times L^\infty \to \R^n$. 
\begin{defin}
  \label{def:family-Lip-pt}
  A set $\F$ of functions $f : \R_{\ge 0} \times \R^{(\ell+1)n} \times L^\infty \to \R^n$ is said to satisfy Assumption~\ref{as:Lip-pt} uniformly whenever the constant $L(R,T)$ in item~\ref{item:fLip-pt-dist}) and the local essential bound in item~\ref{item:f0bnd-pt-dist}) are the same for every $f \in \F$. The zero-measure set $Z$ may indeed depend on $f$.
\end{defin}
%
\begin{as}
  \label{as:fweaklim}
  The family $\F$ satisfies Assumption~\ref{as:Lip-pt} uniformly  and the following conditions. 
  \begin{enumerate}[a)]
  \item \label{item:weakstar-cont} for every $T>0$, $R>0$,
    $\varepsilon > 0$ and sequence $\{\phi^k\} \subset B_R^{L^\infty}$ satisfying $\phi^k \wto \phi^0$, there exists $j \in \N$ such that 
    \begin{align*}
      |f(t, \xi_{0:\ell}, \phi^k) - f(t, \xi_{0:\ell}, \phi^0)| < \varepsilon
    \end{align*}
    for all $t\in [0,T]\setminus Z_f$, $\xi_{0:\ell}\in B_R^{(\ell+1) n}$, $k\ge j$ and $f\in\F$, for some zero-measure set $Z_f\subset [0,T]$ that may depend on $f$;
  \item \label{item:discint-cont} for every $t_0 \ge 0$, 
  $R\ge 0$, continuous function $x : [t_0, t_0+\md] \to B_R^n$, sequence $\{\phi^k\} \subset B_R^{L^\infty}$ satisfying $\phi^k \wto \phi^0$ and $\varepsilon > 0$, there exists $j \in \N$ such that for all $t\in [t_0,t_0+\md]$, $k\ge j$, and $f\in\F$,
    \begin{align*}
      \left|\int_{t_0}^{t} \big[{\scriptstyle f(s,\xx^k(s-\tau_{0:\ell}),\xx^0_s) - f(s,\xx^0(s-\tau_{0:\ell}),\xx^0_s)}\big]\ ds \right| < \varepsilon
    \end{align*}
    with $\xx^k = (\phi^k\conc{t_0}x)$.
  \end{enumerate}
\end{as}
Assumption~\ref{as:fweaklim}\ref{item:weakstar-cont}) can be equivalently formulated by saying that the map $\phi \mapsto f(t,\xi_{0:\ell},\phi)$, which characterizes the distributed delay effect, is continuous in the weak-* topology uniformly over $\xi_{0:\ell}$ in compact sets and uniformly over $t$ almost everywhere in compact intervals. Assumption~\ref{as:fweaklim}\ref{item:discint-cont}) in turn imposes a condition on how the discrete delay may affect the dynamics. A broad class of systems satisfying this assumption is 
\begin{subequations}
    \label{eq:exa:ass3}
    \begin{align}
        \label{eq:exa:ass3:f}
        \dot x(t) &= g_0(t,x(t),\Lint{t,\xx_t}) + \notag\\
        &\hspace{1cm}G_1(t,x(t),\Lint{t,\xx_t})\xx(t-\tau_{1:\ell}) \\
        \label{eq:Lint:def}
        \Lint{t,\phi} &=\int_{-\md}^0 K(t,s)\phi(s)\:ds
    \end{align}    
\end{subequations}    
where $K:\R_{\ge 0}\times [-\md,0] \to \R^{p\times n}$, $\L : \R_{\ge 0}\times L^{\infty}\to \R^p$,
    $g_0:\R_{\ge 0}\times \R^n \times \R^p\to \R^n$, $G_1:\R_{\ge 0}\times \R^n \times \R^p\to \R^{n\times \ell n}$ and $\xx(t-\tau_{1:\ell})=(\xx(t-\tau_1),\ldots,\xx(t-\tau_{\ell}))$. 
    The following is our first contribution, whose relevance will become evident in Section~\ref{sec:BRS} with Theorem~\ref{thm:brs}.
\begin{prop}
    \label{prop:exa:ass3}
    The single-element family $\F=\{f\}$ with $f$ defined via 
    \begin{align*}
        f(t,\xi_{0:\ell},\phi)=g_0(t,\xi_0,\L(t,\phi))+G_1(t,\xi_0,\L(t,\phi)) \xi_{1:\ell},
    \end{align*}    
    where $\xi_{0:\ell}=(\xi_0,\xi_{1:\ell})$ and $\xi_{1:\ell}=(\xi_1,\ldots,\xi_{\ell})$,
    satisfies Assumption \ref{as:fweaklim}, whenever the following conditions hold, wherein $g_j$ denotes the $j$-th column of $G_1$: 
    \begin{enumerate}[a)] \label{ex:measK}
    \item $g_0,\ldots,g_{\ell n}$ satisfy conditions \ref{ex:caratheodory}) and \ref{ex:lipschitz}) of Proposition~\ref{prop:exa:ass2}, the latter with $\xi_0$ and $\bar{\xi}_0$ instead of $\xi_{0:\ell}$ and $\bar{\xi}_{0:\ell}$, respectively. 
%
    \item $K(\cdot,s)$ is continuous for all $s\in [-\md,0]$, and the columns of $K(t,\cdot)$ are measurable for all $t\ge 0$;
    \item \label{ex:boundK} for every $T>0$ there exists $M_T\in L^{1}([-\md,0],\R)$ such that
        $\|K(t,s)\|\le M_T(s)$
    for all $t\in [0,T]$ and almost all $s\in [-\md,0]$.
\end{enumerate}
\end{prop}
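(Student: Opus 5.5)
The proof has three parts: Assumption~\ref{as:Lip-pt} for $f$, then Assumption~\ref{as:fweaklim}\ref{item:weakstar-cont}), then Assumption~\ref{as:fweaklim}\ref{item:discint-cont}). Since $\F=\{f\}$, the uniformity requirement of Definition~\ref{def:family-Lip-pt} is automatic.

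\textbf{Assumption~\ref{as:Lip-pt}.} I would reduce this to Proposition~\ref{prop:exa:ass2} with $G(t,s,\eta)=K(t,s)\eta$ and
$$g(t,\xi_{0:\ell},\nu)=g_0(t,\xi_0,\nu)+G_1(t,\xi_0,\nu)\xi_{1:\ell}.$$

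For condition~\ref{item:caratheodory-G}), continuity of $G$ in $t$ follows from continuity of $K(\cdot,s)$. Carath\'eodory in $(s,\eta)$ follows from measurability of the columns of $K(t,\cdot)$ and linearity in $\eta$.

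For condition~\ref{item:Lips-G}), take $M_{R,T}=RM_T$ and $\hat L_{R,T}=M_T$.

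For conditions~\ref{ex:caratheodory})--\ref{ex:lipschitz}), $g$ is Carath\'eodory as a sum of products of Carath\'eodory functions. Locally Lipschitz follows from
$$|G_1(\xi_0,\nu)\xi_{1:\ell}-G_1(\bar\xi_0,\bar\nu)\bar\xi_{1:\ell}|\le |G_1(\xi_0,\nu)-G_1(\bar\xi_0,\bar\nu)|\,|\xi_{1:\ell}|+|G_1(\bar\xi_0,\bar\nu)|\,|\xi_{1:\ell}-\bar\xi_{1:\ell}|.$$
The factor $|G_1(\bar\xi_0,\bar\nu)|$ is bounded on balls by $|G_1(t,0,0)|+\bar L R$, which is locally essentially bounded. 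Finally, $|\L(t,\phi)|\le \|M_T\|_{L^1}\|\phi\|$, so balls map into balls.

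\textbf{Item~\ref{item:weakstar-cont}).} First I would show that $\L(t,\phi^k)\to\L(t,\phi^0)$ uniformly in $t\in[0,T]$. Pointwise convergence holds because each row of $K(t,\cdot)$ lies in $L^1$. Uniformity follows from a contradiction argument like Lemma~\ref{lem:unifdist}. Suppose $t_k\to t^*$ with $|\L(t_k,\phi^k)-\L(t_k,\phi^0)|\ge\varepsilon$. By continuity of $K(\cdot,s)$ and dominated convergence with $M_T$, $\|K(t_k,\cdot)-K(t^*,\cdot)\|_{L^1}\to0$. Combined with $\|\phi^k-\phi^0\|\le 2R$, this gives a contradiction.

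Then the Lipschitz property of $g_0$ and $G_1$ in $\nu$ on $B_{R'}$ with $R'=R\vee R\|M_T\|_1$ gives
$$|f(t,\xi,\phi^k)-f(t,\xi,\phi^0)|\le \bar L(1+\ell nR)\,|\L(t,\phi^k)-\L(t,\phi^0)|$$
for a.e.\ $t$, uniformly in $\xi\in B_R$.

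\textbf{Item~\ref{item:discint-cont}).} This is the main step. The integrand difference is
$$G_1(s,x(s),\L(s,\xx^0_s))\,\big[\xx^k(s-\tau_{1:\ell})-\xx^0(s-\tau_{1:\ell})\big].$$
Here $s\in[t_0,t_0+\md]$ and $x$ is fixed and continuous. The coefficient $H(s):=G_1(s,x(s),\L(s,\xx^0_s))$ is therefore a fixed matrix function in $L^\infty$, hence in $L^1$, on $[t_0,t_0+\md]$; its measurability comes via Assumption~\ref{as:Lip-pt}\ref{item:ftt-meas}) already established. The bracket is nonzero only where $s-\tau_i<t_0$, and there it equals $\phi^k(s-\tau_i-t_0)-\phi^0(s-\tau_i-t_0)$.

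After the change of variables $r=s-\tau_i-t_0$, the integral over $[t_0,t]$ becomes $\int_{-\md}^0 h^i_t(r)^T(\phi^k-\phi^0)(r)\,dr$. Here $h^i_t$ is the corresponding block of $H(r+\tau_i+t_0)$, truncated to $r\le t-t_0-\tau_i$. These are $L^1$ functions depending continuously on $t$ in $L^1$, since truncation of a fixed $L^1$ function is continuous. The family $\{h^i_t: t\in[t_0,t_0+\md]\}$ is therefore compact in $L^1$.

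Weak-* convergence on a bounded sequence is uniform on $L^1$-compact sets, by a finite $\varepsilon$-net together with the bound $\|\phi^k-\phi^0\|\le 2R$. This yields the required $j$, uniform in $t$. I expect the obstacle to be mostly bookkeeping: handling the indicator truncations correctly and verifying the measurability and integrability of $H$.
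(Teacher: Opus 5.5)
Your proposal is correct and follows the paper's proof. It uses the same reduction to Proposition~\ref{prop:exa:ass2} with $G(t,s,\eta)=K(t,s)\eta$, the same uniform-in-$t$ convergence of $\L(t,\phi^k)$ combined with the Lipschitz property in $\nu$ for Assumption~\ref{as:fweaklim}\ref{item:weakstar-cont}), and the same rewriting of the discrete-delay term as $H(s)[\xx^k(s-\tau_{1:\ell})-\xx^0(s-\tau_{1:\ell})]$ paired, after a shift, with truncated $L^1$ kernels for Assumption~\ref{as:fweaklim}\ref{item:discint-cont}).

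The only difference is how uniformity in $t$ is obtained. You use a direct sequential argument and $L^1$-compactness of the truncated kernels (a finite $\varepsilon$-net), whereas the paper uses joint weak-* continuity, hence uniform continuity, of $\L$ and $\L^*_{im}$ on the weak-* compact sets $[0,T]\times B_R^{L^\infty}$ and $[t_0,t_0+\md]\times B_R^{L^\infty}$, together with Lemma~\ref{lem:unifdist}; your version is slightly more elementary. As a minor aside, measurability of $H$ is most cleanly justified by applying Proposition~\ref{prop:exa:ass2} to each column $g_j$, rather than by appealing to Assumption~\ref{as:Lip-pt}\ref{item:ftt-meas}) for $f$ itself.
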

The main differences in the system~\eqref{eq:exa:ass3} compared to \eqref{eq:sys-exa1} are that $\Lint{t,\phi}$ is linear in $\phi$, as opposed to $\Nin{t,\phi}$, and that the differential equation is affine in the discrete delays. 

\begin{proof}
The function $f$ can be written in the form $f(t,\xi_{0:\ell},\phi)=g(t,\xi_{0:\ell},\Lint{t,\phi})$ with $g(t,\xi_{0:\ell},\nu)=g_0(t,\xi_0,\nu)+G_1(t,\xi_0,\nu)\xi_{1:\ell}$ and $\mathcal{L}(t,\phi)=\int_{-{\md}}^0 G(t,s,\phi(s)) ds$, with $G(t,s,\eta)=K(t,s)\eta$. The functions $g$ and $G$ satisfy the conditions in Proposition~\ref{prop:exa:ass2}. In consequence, $f$ satisfies Assumption \ref{as:Lip-pt}. 

Let $T>0$ and $R>0$. We first prove that $\mathcal{L}$ is continuous in $[0,T]\times B^{L^{\infty}}_R$ when $B^{L^{\infty}}_R$ is endowed with the weak* topology. Consider sequences $\{t^k\}\subset[0,T]$ and $\{\phi^k\}\subset B^{L^{\infty}}_R$ such that $t^k\to t^0$ and $\phi^k \wto \phi^0$. Then
\begin{multline*}
|\L(t^k,\phi^k)-\L(t^0,\phi^0)|\\
\le |\L(t^k,\phi^k)-\L(t^0,\phi^k)|+|\L(t^0,\phi^k)-\L(t^0,\phi^0)|.
\end{multline*}
Since
\begin{align*}
|\L(t^k,\phi^k)-\L(t^0,\phi^k)|&\le \int_{-\md}^0 \|K(t^k,s)-K(t^0,s)\|\|\phi^k\| ds\\
&\le R \int_{-\md}^0 \|K(t^k,s)-K(t^0,s)\|ds,
\end{align*}
$\|K(t^k,s)-K(t^0,s)\|\to 0$ for all $s$ due to the continuity of $K$ in $t$, and $\|K(t^k,s)-K(t^0,s)\|\le 2 M_T(s)$ for almost all $s$, by the Lebesgue dominated convergence theorem, it follows that 
$|\L(t^k,\phi^k)-\L(t^0,\phi^k)|\to 0$. Let $k_i(\cdot)$ be the i-th row of $K(t^0,\cdot)$. Then, $k_i\in L^1$ by item~\ref{ex:boundK}). Therefore,
$\int_{-\md}^0 k_i(s)[\phi^k(s)-\phi^0(s)]ds\to 0$ since $\phi^k\wto \phi^0$. In consequence $\int_{-\md}^0 K(t,s)[\phi^k(s)-\phi^0(s)]ds\to 0$ and hence $|\L(t^0,\phi^k)-\L(t^0,\phi^0)|\to 0$. So, $|\L(t^k,\phi^k)-\L(t^0,\phi^0)|\to 0$, and the continuity of $\L$ follows. Since $B^{L^{\infty}}_R$ is compact in the weak* topology (Lemma~\ref{lem:weakfacts}), $\L$ is bounded and uniformly continuous in $[0,T]\times B^{L^{\infty}}_R$. In particular, for every $\varepsilon>0$ there exists $\delta(\varepsilon)>0$ so that for all $\phi,\psi \in B^{L^{\infty}}_R$ and all $t\in [0,T]$
\begin{align*}
    \norm{\phi-\psi}<\delta \Rightarrow |\L(t,\phi)-\L(t,\psi)|<\varepsilon.
\end{align*}
We next prove that $\F=\{f\}$ satisfies item \ref{item:weakstar-cont}) of Assumption \ref{as:fweaklim}. Let $T>0$, $R>0$,
    $\varepsilon > 0$ and $\{\phi^k\} \subset B_R^{L^\infty}$ be a sequence satisfying $\phi^k \wto \phi^0$. Let 
    $$R^*=\max\big\{R,\ \sup\{|\L(t,\phi)|:t\in[0,T],\phi\in B_R^{L^\infty}\}\big\}$$
    and let $L=L(R^*,T)$ and $Z$ be the Lipschitz constant and the zero-measure set coming from item \ref{ex:lipschitz}) in Proposition~\ref{prop:exa:ass2} corresponding to $R^*$ instead of $R$. Then
\begin{multline*}
|f(t, \xi_{0:\ell}, \phi^k) - f(t, \xi_{0:\ell}, \phi^0)| \\
\le |g(t, \xi_{0:\ell}, \mathcal{L}(t,\phi^k)) - g(t, \xi_{0:\ell}, \mathcal{L}(t,\phi^0))|\\
\le L |\mathcal{L}(t,\phi^k)-\mathcal{L}(t,\phi^0)|
\end{multline*}
for all $\xi_{0:\ell}\in B^{(\ell+1)n}_R$ and all $t\in [0,T]\setminus Z$. By taking $\delta$ as above, but corresponding to $\varepsilon/L$ instead of $\varepsilon$, and $k$ so that $\norm{\phi^j-\phi^0}<\delta$ for all $j\ge k$, we have that
\begin{align*}
|f(t, \xi_{0:\ell}, \phi^k) - f(t, \xi_{0:\ell}, \phi^0)|<\varepsilon 
\end{align*}
for all $\xi_{0:\ell}\in B^{(\ell+1)n}_R$, all $t\in [0,T]\setminus Z$ and all $j\ge k$.

Next, we proceed to prove that $f$ satisfies item \ref{item:discint-cont}) of Assumption \ref{as:fweaklim}.
Let $t_0 \ge 0$, $R\ge 0$, $x : [t_0, t_0+\md] \to B_R^n$ continuous, $\{\phi^k\} \subset B_R^{L^\infty}$ a sequence satisfying $\phi^k \wto \phi^0$ and $\varepsilon > 0$. Define $\xx^k = (\phi^k\conc{t_0}x)$ for $k=0,1,\ldots.$ Then, for all $t\in [t_0,t_0+\md]$,
    \begin{multline*}
      \left|\int_{t_0}^{t} \big[{\scriptstyle f(s,\xx^k(s-\tau_{0:\ell}),\xx^0_s) - f(s,\xx^0(s-\tau_{0:\ell}),\xx^0_s)}\big]\ dt \right|\\
      =\left|\int_{t_0}^t {\scriptstyle G_1(s,x(s),\L(s,\xx^0_s))[\xx^k(s-\tau_{1:\ell})-\xx^0(s-\tau_{1:\ell})] } ds \right|\\
      = \left|\int_{t_0}^t {\scriptstyle H(s)[\xx^k(s-\tau_{1:\ell})-\xx^0(s-\tau_{1:\ell})] } ds \right|,
    \end{multline*}
    where $H(s)=G_1(s,x(s),\L(s,\xx^0_s))$. Due to the assumptions on $G_1$ and $K$, it follows that the components of $H$ are essentially bounded and therefore belong to $L^1([t_0,t_0+\md],\R)$. Let $h_{i}(\cdot)$ be the $i$-th row of $H(\cdot)$ and write $h_i(\cdot)=[h_{i1}(\cdot)\cdots h_{i\ell}(\cdot)]$, where each $h_{im}(\cdot)$ has $n$ components, for $m=1,\ldots,\ell$. Then, if $\rho_i(s)$ is the $i$-th component of the product $H(s)[\xx^k(s-\tau_{1:\ell})-\xx^0(s-\tau_{1:\ell})]$, 
    \begin{align*}
    \int_{t_0}^t \rho_i(s) ds = \sum_{m=1}^{\ell} \int_{t_0}^t h_{im}(s) [\xx^k(s-\tau_m)-\xx^0(s-\tau_m)] ds.
    \end{align*}
The existence of $j$ so that 
\begin{align*}
 \left|\int_{t_0}^{t} \big[{\scriptstyle f(s,\xx^k(s-\tau_{0:\ell}),\xx^0_s) - f(s,\xx^0(s-\tau_{0:\ell}),\xx^0_s)}\big]\ dt \right|<\varepsilon
 \end{align*}
 for all $t\in [t_0,t_0+\md]$ and all $k\ge j$ will follow once we show that each sequence of functions $\{I_{im}^k(\cdot)\}$ with
 $I_{im}^k(t) := \int_{t_0}^t h_{im}(s)[\xx^k(s-\tau_m)-\xx^0(s-\tau_m)] ds$ converges to 0 uniformly in $t\in [t_0,t_0+\md]$. Let $t\in [t_0,t_0+\md]$. Considering  the change of variable $s=r+t_0+\md$, the function $h_{im}^*(r)=h_{im}(r+t_0+\md)$ and $\Delta_m=\md-\tau_m$, we have that
 \begin{multline*}
    \int_{t_0}^t h_{im}(s)[\xx^k(s-\tau_m)-\xx^0(s-\tau_m)] ds\\=\int_{-\md}^{t-t_0-\md} h^*_{im}(r)[\xx^k_{t_0+\Delta_m}(r)-\xx^0_{t_0+\Delta_m}(r)]dr\\
    =\int_{-\md}^0 h^*_{im}(r)\chi(t,r)[\xx^k_{t_0+\Delta_m}(r)-\xx^0_{t_0+\Delta_m}(r)]dr,
 \end{multline*}
 where $\chi(t,r)=1$ if $-\md \le r \le t-t_0-\md$ and $0$ elsewhere. Then, the uniform convergence of each integral follows from the facts that the maps $\L_{im}^*(t,\phi)=\int_{-\md}^0 h_{im}^*(r)\chi(t,r)\phi(r)dr$ are continuous on $[t_0,t_0+\md]\times B_R^{L^{\infty}}$ when the weak-* topology is considered in $B_R^{L^{\infty}}$, and therefore are uniformly continuous there, and the uniform convergence of $\xx^k_{t_0+\Delta_m}$ to $\xx^0_{t_0+\Delta_m}$ given by Lemma \ref{lem:unifdist}. Assumption~\ref{as:fweaklim} is hence satisfied.
\end{proof}

The next result establishes that solutions corresponding to weakly converging sequences of initial conditions converge uniformly to the solution corresponding to the limit of the initial conditions' sequence over compact time intervals, and that this happens uniformly for all systems within a family provided the reachability sets of every system in the family are uniformly bounded. The proof is given in Appendix~\ref{ap:proof}.
\begin{prop}
  \label{prop:unifxconv-rel}
  Let a family $\F$ satisfy Assumption~\ref{as:fweaklim}. Let $t_0 \ge 0$, $\xi^0 \in \R^n$, $\phi^0 \in L^\infty$. For every $f\in\F$, let the unique maximally defined solution $x(\,\cdot\,; t_0, \xi^0,\phi^0,f)$ of~(\ref{eq:system-pt-dist-ess}) have existence time $T_{(t_0,\xi^0,\phi^0,f)} > t_0$. Let $T \in (t_0,T_{(t_0, \xi^0,\phi^0,f)})$ for every $f\in\F$, let $\xi^k \to \xi^0$ and $\phi^k \wto \phi^0$. If
  \begin{align}
    \label{eq:Rx-def}
    R_x &:= \sup_{f\in\F} \left(\sup_{t_0\le t\le T} |x(t;t_0,\xi^0,\phi^0,f)|\right) < \infty    
  \end{align}
  then for every $\varepsilon > 0$ there exists $j\in\N$ such that for all $k\ge j$ and all $f\in\F$,
  \begin{align}
    \label{eq:xunifx0-gen}
    \sup_{t_0\le s\le T} |x(s;t_0,\xi^k,\phi^k,f) - x(s;t_0,\xi^0,\phi^0,f)| \le \varepsilon 
  \end{align}
\end{prop}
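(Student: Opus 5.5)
I plan to argue by a step-by-step Gronwall-type comparison on consecutive intervals of length $\md$, which keeps the discrete-delay terms under control via Assumption~\ref{as:fweaklim}\ref{item:discint-cont}). Write $x^k_f := x(\cdot;t_0,\xi^k,\phi^k,f)$ and $x^0_f := x(\cdot;t_0,\xi^0,\phi^0,f)$, with $\xx^k_f = (\phi^k\conc{t_0}x^k_f)$. First fix bounds. By Lemma~\ref{lem:weakfacts}, $\sup_k\|\phi^k\|<\infty$, and $|\xi^k|$ is bounded. Set $R := \max\{R_x, \sup_k\|\phi^k\|, \sup_k|\xi^k|\}+1$, and let $L=L(R,T)$ be the uniform Lipschitz constant from Assumption~\ref{as:Lip-pt}, which is the same for all $f\in\F$ by Definition~\ref{def:family-Lip-pt}. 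By Proposition~\ref{prop:insensitive}, bounded representatives may be used throughout. A priori $x^k_f$ might leave $B_R^n$ or fail to exist up to $T$. To handle this, I would use a stopping time $\sigma_{k,f} := \sup\{t\in[t_0,\min(T,T_{k,f})) : |x^k_f(s)|\le R \ \forall s\le t\}$. I would then show that on $[t_0,\sigma_{k,f}]$ the difference is below $\min(\varepsilon,1/2)$ for large $k$, uniformly in $f$. This forces $\sigma_{k,f}=T$, and by Proposition~\ref{prop:exist-uniq} the solution also exists beyond it.

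For the key estimate on $t\in[t_0,\sigma]$ with $\sigma\le t_0+\md$ (first interval), write $x^k_f(t)-x^0_f(t) = \xi^k-\xi^0 + \int_{t_0}^t [f(s,\xx^k(s-\tau_{0:\ell}),\xx^k_s)-f(s,\xx^0(s-\tau_{0:\ell}),\xx^0_s)]ds$ and split the integrand into three terms. The first is $A = f(s,\xx^k(s-\tau_{0:\ell}),\xx^k_s) - f(s,\xx^k(s-\tau_{0:\ell}),\tilde\xx^k_s)$, where $\tilde\xx^k := (\phi^k\conc{t_0}x^0_f)$; it is bounded by $L\,\max_{r\le s}|x^k_f(r)-x^0_f(r)|$. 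The second is $B = f(s,\xx^k(s-\tau_{0:\ell}),\tilde\xx^k_s) - f(s,\xx^k(s-\tau_{0:\ell}),\xx^0_s)$. Here $\tilde\xx^k_s$ and $\xx^0_s$ differ only through $\phi^k$ versus $\phi^0$, and Lemma~\ref{lem:unifdist} gives $\norm{\tilde\xx^k_s-\xx^0_s}\to0$ uniformly in $s$. I then want Assumption~\ref{as:fweaklim}\ref{item:weakstar-cont}) to make $B$ small uniformly. That assumption is stated for a fixed sequence, so I would argue by contradiction: pick bad $s_k$, $f_k$, pass to subsequences, and use the weak-* metric and compactness of $B_R^{L^\infty}$. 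The third term is $C = f(s,\xx^k(s-\tau_{0:\ell}),\xx^0_s) - f(s,\xx^0(s-\tau_{0:\ell}),\xx^0_s)$. I split it as the pointwise-delay part with $x^k$ replaced by $x^0$, whose integral is small uniformly in $t$ and $f$ by Assumption~\ref{as:fweaklim}\ref{item:discint-cont}) applied to $x=x^0_f$ (continuous, valued in $B_R^n$), plus a Lipschitz remainder bounded by $L\max_{r\le s}|x^k_f-x^0_f|$. Gronwall then gives $\max_{[t_0,t]}|x^k_f-x^0_f| \le (|\xi^k-\xi^0|+\eta_k)e^{L\md}$, where $\eta_k\to0$ uniformly in $f$.

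For later intervals $[t_0+m\md, t_0+(m+1)\md]$ I would restart at $t_1=t_0+m\md$ using Proposition~\ref{prop:semigroup}. The new initial histories $(\phi^k\conc{t_0}x^k_f)_{t_1}$ then converge uniformly in sup norm, hence in the weak-* sense, to the one built from $x^0_f$, apart from the portion still coming from $\phi^k$, which Lemma~\ref{lem:unifdist} handles. Alternatively, I can carry the same three-term split directly over the whole of $[t_0,T]$. In that case the delayed arguments $\xx^k(s-\tau_i)$ with $s-\tau_i\ge t_0$ are controlled by the Gronwall quantity, and only those with $s-\tau_i<t_0$ need Assumption~\ref{as:fweaklim}\ref{item:discint-cont}) (after a time shift). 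Finitely many intervals suffice.

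The main obstacle is uniformity over the arbitrary family $\F$. Assumption~\ref{as:fweaklim}\ref{item:discint-cont}) is stated for a fixed continuous $x$, but the $x$ here is $x^0_f$, which depends on $f$. I would first try to get around this by applying the Lipschitz bound to replace $x^0_f$ inside the discrete-delay arguments. This works for $s-\tau_i\ge t_0$, since that is the continuous portion. For $s-\tau_i<t_0$ only $\phi^k-\phi^0$ enters, while the remaining arguments (the current state and the distributed term) depend on $f$. I expect a contradiction/subsequence argument to be needed there, using the equicontinuity of $\{x^0_f\}$ (uniform Lipschitz bound $F_R$) and Arzelà–Ascoli.
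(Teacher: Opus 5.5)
Your proposal is correct and follows essentially the paper's route. The paper makes the same split: a Lipschitz/Gronwall part (swapping $x^k_f$ for $x^0_f$ with $\phi^k$ fixed), and a part where $\phi^k$ is swapped for $\phi^0$. That second part is handled by uniform weak-* continuity (obtained by contradiction), Lemma~\ref{lem:unifdist} and Assumption~\ref{as:fweaklim}\ref{item:discint-cont}), which is the content of Lemma~\ref{lem:fweakint}. The estimate is run directly on all of $[t_0,T]$ because the $\phi^k$-dependence vanishes after $t_0+\md$, and your stopping time plays the role of the paper's cutoff $\alpha$.

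Your observation that Assumption~\ref{as:fweaklim}\ref{item:discint-cont}) is stated for a fixed $x$, while $x^0_f$ depends on $f$, is well taken. The proof of Lemma~\ref{lem:fweakint} passes over this point, which is harmless only for the single-element families used later. Your Arzel\`a--Ascoli remedy works: along a bad subsequence, replace every occurrence of $x^0_{f_r}$ by its uniform limit $y$ at a Lipschitz cost of order $\|x^0_{f_r}-y\|_\infty$, then apply the assumption to $y$ and the subsequence $\{\phi^{k_r}\}$.
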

\begin{rem}
  The inner supremum in~(\ref{eq:Rx-def}) is finite for every $f$, due to the continuity of the solution over the compact interval $[t_0,T]$; hence~(\ref{eq:Rx-def}) is true if the cardinality of $\F$ is finite. However, this result is most useful when $\F$ contains an infinite number of systems and may be used in the future for systems with inputs. 
\end{rem}

\subsection{Bounded Reachability Sets}
\label{sec:BRS}

The next theorem and corollary constitute our main results. The uniform boundedness of all the solutions of~\eqref{eq:system-pt-dist-ess} from a specific initial time and bounded initial conditions is the result of Theorem~\ref{thm:brs} below. The subsequent Corollary~\ref{cor:FCiffBRS} establishes BRS. 
\begin{thm}
  \label{thm:brs}
  Consider a single-element family $\F=\{f\}$ satisfying Assumption~\ref{as:fweaklim}. For each $t_0 \ge 0$, $\xi\in\R^n$ and $\phi\in L^\infty$,
  let $x(\,\cdot\,; t_0, \xi, \phi) : [t_0,T_{(t_0,\xi,\phi)}) \to \R^n$ denote the unique maximally defined solution of~(\ref{eq:system-pt-dist-ess}). Let $T>t_0\ge 0$, $R>0$, and suppose that 
    \begin{align}
        \label{eq:thm:brs:exist-time}
      \bar T := \inf_{\xi\in B_R^n, \phi\in B_R^{L^\infty}} T_{(t_0,\xi,\phi)} > T.
    \end{align}
    Then, \eqref{eq:brs} holds.
\end{thm}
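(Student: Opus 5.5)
Argue by contradiction, combining weak-* compactness of bounded sets in $L^\infty$ (Lemma~\ref{lem:weakfacts}) with the uniform convergence of solutions from Proposition~\ref{prop:unifxconv-rel}. Suppose \eqref{eq:brs} fails for the given $t_0<T$ and $R$. Then there are sequences $\xi^k\in B_R^n$, $\phi^k\in B_R^{L^\infty}$ and $t_k\in[t_0,T]$ with $|x(t_k;t_0,\xi^k,\phi^k)|\to\infty$. These solutions are defined on $[t_0,T]$ because of \eqref{eq:thm:brs:exist-time}. By compactness of $B_R^n$ and Lemma~\ref{lem:weakfacts} we pass to a subsequence (not relabelled) with $\xi^k\to\xi^0\in B_R^n$, $\phi^k\wto\phi^0$ and $t_k\to t^*\in[t_0,T]$. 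Weak-* closedness of the ball, i.e.\ the liminf bound in Lemma~\ref{lem:weakfacts}, gives $\|\phi^0\|\le R$. Hence $T_{(t_0,\xi^0,\phi^0)}\ge\bar T>T$, and the limit solution is defined on $[t_0,T]$.

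Next apply Proposition~\ref{prop:unifxconv-rel} with the single-element family $\F=\{f\}$. Its hypothesis \eqref{eq:Rx-def} holds automatically: $x(\cdot;t_0,\xi^0,\phi^0)$ is continuous on the compact interval $[t_0,T]$, as noted in the remark following the proposition. The proposition then yields, for $\varepsilon=1$, an index $j$ such that
\begin{align*}
\sup_{t_0\le s\le T}|x(s;t_0,\xi^k,\phi^k)-x(s;t_0,\xi^0,\phi^0)|\le 1 \quad\text{for all }k\ge j.
\end{align*}
Consequently $|x(t_k;t_0,\xi^k,\phi^k)|\le 1+\max_{[t_0,T]}|x(\cdot;t_0,\xi^0,\phi^0)|<\infty$ for all $k\ge j$, which contradicts $|x(t_k;t_0,\xi^k,\phi^k)|\to\infty$. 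Therefore \eqref{eq:brs} holds.

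All the analytic difficulty is carried by Proposition~\ref{prop:unifxconv-rel}, which we take as given. Assumption~\ref{as:fweaklim} enters only through that proposition. Within this proof the two points to check are that the limit $(\xi^0,\phi^0)$ stays in $B_R^n\times B_R^{L^\infty}$, so that the existence-time hypothesis applies to it, and that the hypotheses of Proposition~\ref{prop:unifxconv-rel} require nothing uniform in $k$ beyond what \eqref{eq:thm:brs:exist-time} supplies. No a priori bound on $\{x(\cdot;t_0,\xi^k,\phi^k)\}$ is needed, since the proposition only asks for boundedness of the limit solution.

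If one preferred not to rely on \eqref{eq:thm:brs:exist-time} for each $k$, the same argument can be run on the sub-interval up to the first exit from a large ball. We do not expect to need this, because \eqref{eq:thm:brs:exist-time} already guarantees that every solution in the sequence is defined on $[t_0,T]$.
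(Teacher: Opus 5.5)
Your proof is correct and follows the paper's argument essentially verbatim. You extract a subsequence with $\xi^k\to\xi^0$ and $\phi^k\wto\phi^0\in B_R^{L^\infty}$, then invoke Proposition~\ref{prop:unifxconv-rel} with $\varepsilon=1$ to bound the solutions and reach a contradiction; the only difference is cosmetic: the paper applies the proposition on $[t_0,T']$ with $T'\in(T,\bar T)$, whereas you use $[t_0,T]$ directly, which is equally valid since $T<\bar T\le T_{(t_0,\xi^0,\phi^0)}$.
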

\begin{proof}
  For a contradiction, suppose that the supremum in~(\ref{eq:brs}) equals $\infty$. 
  Then, sequences $\{t^k\}, \{\xi^k\}, \{\phi^k\}$ must exist, with $t^k \in [t_0,T]$, $\xi^k\in B_R^n$, and $\phi^k\in B_R^{L^\infty}$, such that
  \begin{align*}
    \lim_{k\to\infty} |x(t^k;t_0,\xi^k,\phi^k)| = \infty.
  \end{align*}
  From compactness of the interval $[t_0,T]$ and the closed ball $B_R^n$, and from Lemma~\ref{lem:weakfacts}, subsequences $\{t^{k_i}\}$, $\{\xi^{k_i}\}$ and $\{\phi^{k_i}\}$ must exist so that $t^{k_i} \to t^0 \in [t_0,T]$, $\xi^{k_i} \to \xi^0 \in B_R^n$, $\phi^{k_i} \wto \phi^0 \in B_R^{L^\infty}$ and
  \begin{align}
    \label{eq:limunb}
    \lim_{i\to\infty} |x(t^{k_i}; t_0,\xi^{k_i},\phi^{k_i})| = \infty.
  \end{align}
  Let $T' \in (T,\bar T)$ and define
  \begin{align*}
    R_x = \sup_{t_0\le s \le T'} |x(s; t_0, \xi^0,\phi^0)|
  \end{align*}
  which is finite by continuity and the fact that $T_{(t_0,\xi^0,\phi^0)} \ge \bar T > T'$. 
  According to Proposition~\ref{prop:unifxconv-rel}, there exists $j\in\N$ such that
  \begin{align*}
    \sup_{t_0\le s\le T'} |x(s;t_0,\xi^{k_i},\phi^{k_i}) - x(s;t_0,\xi^0,\phi^0)| \le 1
  \end{align*}
  for all $i\ge j$ and hence
  \begin{align*}
    \sup_{t_0\le s\le T'} |x(s;t_0,\xi^{k_i},\phi^{k_i})| \le R_x + 1 \qquad \tforall i\ge j
  \end{align*}
  This contradicts~(\ref{eq:limunb}) because $t^{k_i} \in [t_0,T] \subset [t_0,T']$ for all $i\in\N$.
\end{proof}

\begin{cor} \label{cor:FCiffBRS}
  Let $f : \R_{\ge 0} \times \R^{(\ell+1)n} \times L^\infty \to \R^n$ and let the single-element family $\F= \{f\}$ satisfy Assumption~\ref{as:fweaklim}. System~(\ref{eq:system-pt-dist-ess}) is forward complete (for every $t_0 \ge 0$) if and only if it has bounded reachability sets.
\end{cor}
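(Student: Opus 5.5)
The ``if'' direction is immediate from Definition~\ref{def:brs}, since BRS includes forward completeness by definition. For the converse, I assume forward completeness, that is, $T_{(t_0,\xi,\phi)}=\infty$ for every $t_0\ge 0$, $\xi\in\R^n$ and $\phi\in L^\infty$. I then fix $0\le t_0<T$ and $R>0$, with the aim of verifying \eqref{eq:brs}.

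The plan is to apply Theorem~\ref{thm:brs} directly. Under forward completeness every maximal existence time is $\infty$, so
\begin{align*}
\bar T = \inf_{\xi\in B_R^n,\ \phi\in B_R^{L^\infty}} T_{(t_0,\xi,\phi)} = \infty > T.
\end{align*}
Hypothesis \eqref{eq:thm:brs:exist-time} therefore holds, and Theorem~\ref{thm:brs} gives \eqref{eq:brs} for these $t_0,T,R$. Because $t_0$, $T$ and $R$ were arbitrary, and forward completeness is part of the assumption, the system has BRS.

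The only step that needs any care is confirming that the notion of forward completeness used here covers all initial conditions $(\xi,\phi)\in\R^n\times L^\infty$, including discontinuous ones, and not only continuous ones. I take forward completeness in the sense of Definition~\ref{def:brs}, namely that solutions of (\ref{eq:system-pt-dist-ess}) are defined for all $t\ge t_0$ for every admissible initial pair. With that reading, the infimum above ranges over a set of values all equal to $\infty$, and the argument closes. Everything substantive, the weak-* compactness and the uniform convergence of solutions in Proposition~\ref{prop:unifxconv-rel}, is already contained in Theorem~\ref{thm:brs}.
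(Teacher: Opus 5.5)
Your proposal is correct and follows the paper's proof: forward completeness gives $\bar T=\infty>T$ in \eqref{eq:thm:brs:exist-time}, and Theorem~\ref{thm:brs} then yields \eqref{eq:brs} for arbitrary $t_0$, $T$, $R$. You also spell out the trivial ``if'' direction and the point that forward completeness must hold for all initial pairs in $\R^n\times L^\infty$, both of which the paper leaves implicit.
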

\begin{proof}
    Let $T> t_0\ge 0$ and $R>0$. By the forward completeness assumption, then~\eqref{eq:thm:brs:exist-time} holds. Application of Theorem~\ref{thm:brs} then establishes~\eqref{eq:brs}. Since $t_0$, $T$ and $R$ are arbitrary, then BRS is established.
\end{proof}

System (\ref{eq:system-pt-dist-ess}) is said to have BRS for continuous initial conditions if (\ref{eq:brs}) holds with initial condition restricted as $(\xi,\phi) \in \mathcal{C}$ (see the discussion at the end of Section \ref{sec:equiv-init-cond}).
\begin{prop} \label{prop:ciffm}
  Let $f : \R_{\ge 0} \times \R^{(\ell+1)n} \times L^\infty \to \R^n$ and let the single-element family $\F= \{f\}$ satisfy Assumption~\ref{as:fweaklim}. Then, system~(\ref{eq:system-pt-dist-ess}) has BRS if and only if it does for continuous initial conditions.
  \end{prop}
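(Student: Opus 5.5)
The forward implication is immediate: continuous initial conditions $(\xi,\phi)\in\C$ with $\|\psi\|_\infty\le R$ satisfy $\xi\in B_R^n$ and $\phi\in B_R^{L^\infty}$, so the supremum over continuous initial conditions is bounded by the one in~\eqref{eq:brs}. The work is in the converse. I assume BRS for continuous initial conditions, namely forward completeness from continuous initial conditions together with \eqref{eq:brs} over $(\xi,\phi)\in\C$, and I will show that every solution from $(\xi,\phi)\in B_R^n\times B_R^{L^\infty}$ exists on $[t_0,T]$ and is bounded there by the same constant that bounds the continuous case. Once forward completeness for all initial conditions in $\R^n\times L^\infty$ is established, BRS follows from Corollary~\ref{cor:FCiffBRS}. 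Alternatively, the uniform bound can be read off directly from the approximation argument.

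The key tool is approximation in the weak-* sense. Fix $t_0$, $R$ and $(\xi,\phi)\in B_R^n\times B_R^{L^\infty}$. I first construct continuous $\psi^k\in\C$ with $\psi^k(0)=\xi$, $\|\psi^k\|_\infty\le R$ and $\psi^k\wto\phi$. To do this, mollify a representative of $\phi$ to obtain continuous functions bounded by $R$ that converge a.e., hence weakly-* by dominated convergence against $g\in L^1$. Then modify each function on $[-1/k,0]$ by a linear interpolation to the value $\xi$. This keeps the bound $R$ (since $|\xi|\le R$ and the ball is convex) and changes the function only on a set of vanishing measure, which preserves weak-* convergence.

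Let $M$ denote the finite supremum over continuous initial conditions in $B_R$ on $[t_0,T]$ for a given $T>t_0$. Let $T_\phi=T_{(t_0,\xi,\phi)}$ and suppose $T_\phi\le T$. For any $T''<T_\phi$, the single-element family satisfies \eqref{eq:Rx-def} trivially. Proposition~\ref{prop:unifxconv-rel}, applied with $\xi^k=\xi$ and $\phi^k=\psi^k$, gives uniform convergence of $x(\cdot;t_0,\xi,\psi^k)$ to $x(\cdot;t_0,\xi,\phi)$ on $[t_0,T'']$. Hence $|x(t;t_0,\xi,\phi)|\le M$ on $[t_0,T'')$ for every $T''<T_\phi$. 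The solution is therefore bounded on $[t_0,T_\phi)$, and Proposition~\ref{prop:exist-uniq} (via Lemma~\ref{lem:equiv-ass}) forces $T_\phi=\infty$, a contradiction. So $T_\phi>T$ for every $T$, and the same limiting argument on $[t_0,T]$ yields $|x(t;t_0,\xi,\phi)|\le M$. This gives \eqref{eq:brs} over all of $B_R^n\times B_R^{L^\infty}$ as well as forward completeness.

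The main obstacle is the approximation step. I need $\psi^k$ to be continuous, to match $\xi$ at $0$, to stay in the ball of radius $R$ in the sup norm, and to converge weakly-*. Pointwise and $L^1$ convergence of mollifiers of a bounded function, combined with the small-measure modification near $0$, should handle this. I also need to check carefully that Proposition~\ref{prop:unifxconv-rel} only requires a finite existence horizon beyond $T''$ for the limit solution, which holds by the choice $T''<T_\phi$.
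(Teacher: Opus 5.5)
Your proof is correct and follows the paper's argument. You approximate $(\xi,\phi)$ by continuous initial conditions $(\xi,\psi^k)$ in the same ball with $\psi^k\wto\phi$, then use Proposition~\ref{prop:unifxconv-rel} on each $[t_0,T'']$ with $T''<T_{(t_0,\xi,\phi)}$ to transfer the bound $M$, and conclude global existence from Proposition~\ref{prop:exist-uniq} together with the bound on $[t_0,T]$. The only differences are that you construct the approximating sequence explicitly (mollification plus a linear patch near $0$), where the paper cites ``standard arguments of the theory of real functions'', and that you phrase the existence step as a contradiction.
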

  \begin{proof} The \emph{only if} part is straightforward, so the \emph{if} part is next established. Let $0\le t_0 < T$ and $R>0$. Since  system~(\ref{eq:system-pt-dist-ess}) has BRS for continuous initial conditions, the supremum in (\ref{eq:brs}) is equal to a nonnegative number, say $M$, when the initial conditions are restricted to be continuous. Let $(\xi_0,\phi)$ be an initial condition such that $|\xi_0|\le R$ and $\|\phi\|\le R$. Let $x:[t_0,T_x)\to \R^n$ be the maximal solution of (\ref{eq:system-pt-dist-ess}) corresponding to this initial condition. By using standard arguments of the theory of real functions, it can be proved that there exists a sequence of continuous functions $\{\psi^k\}$ such that $\|\psi^k\|\le R$, $\psi^k(0)=\xi_0$ and $\psi^k\to \phi$ a.e. on $[-\md,0]$. It then follows that $\psi^k\wto \phi$. If $x^k(\cdot)$ is the maximal solution of (\ref{eq:system-pt-dist-ess}) corresponding to the initial condition $(\xi_0,\psi^k)$, we have that $x^k$ is defined for all $t\ge t_0$, $|x^k(t)|\le M$ for all $t\in [t_0,T]$ and all $k$. In addition, applying Proposition~\ref{prop:unifxconv-rel}, it follows that $x^k\to x$ uniformly on $[t_0,T^\prime]$ for all $T^{\prime}\in [t_0,T]\cap [t_0,T_x)$. In consequence $|x(t)|\le M$ for all $t\in [t_0,T^{\prime}]$ and for all $T^{\prime}\in [t_0,T]\cap [t_0,T_x)$. The latter implies that $T_x>T$, and since $T$ is arbitrary, that $T_x=\infty$. Therefore, system (\ref{eq:system-pt-dist-ess}) is forward complete and has BRS.
  \end{proof}
The following result is a straightforward consequence of Corollary~\ref{cor:FCiffBRS} and Proposition~\ref{prop:ciffm}.
\begin{cor} \label{cor:fciffbrsc}
 Let $f : \R_{\ge 0} \times \R^{(\ell+1)n} \times L^\infty \to \R^n$ and let the single-element family $\F= \{f\}$ satisfy Assumption~\ref{as:fweaklim}. Then, system~(\ref{eq:system-pt-dist-ess}) has bounded reachability sets for continuous initial conditions if and only if it is forward complete for initial conditions in $\R^n\times L^{\infty}$.
 \end{cor}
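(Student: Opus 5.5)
The statement to prove is Corollary~\ref{cor:fciffbrsc}: BRS for continuous initial conditions is equivalent to forward completeness for all initial conditions in $\R^n\times L^\infty$. The plan is to chain Corollary~\ref{cor:FCiffBRS} and Proposition~\ref{prop:ciffm}, proving each implication separately.

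For the ``if'' direction, assume the system is forward complete for every initial condition $(\xi,\phi)\in\R^n\times L^\infty$ and every $t_0\ge0$. Corollary~\ref{cor:FCiffBRS} applies directly, since $\F=\{f\}$ satisfies Assumption~\ref{as:fweaklim}. It gives \eqref{eq:brs} over all $\xi\in B_R^n$ and $\phi\in B_R^{L^\infty}$. Continuous initial conditions $(\xi,\phi)\in\C$ with $|\xi|\le R$ and $\|\phi\|\le R$ form a subset of this set, so the supremum restricted to them is also finite. This is BRS for continuous initial conditions. Alternatively, this direction is exactly the ``only if'' part of Proposition~\ref{prop:ciffm}.

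For the ``only if'' direction, assume BRS for continuous initial conditions. By Proposition~\ref{prop:ciffm}, the system then has BRS in the sense of Definition~\ref{def:brs}. That definition includes forward completeness for every initial time and every initial condition in $\R^n\times L^\infty$. More concretely, the proof of Proposition~\ref{prop:ciffm} shows $T_x=\infty$ for an arbitrary $(\xi_0,\phi)$.

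The only point needing care is bookkeeping about which notion of forward completeness is meant. Definition~\ref{def:brs} bundles forward completeness for all of $\R^n\times L^\infty$ into BRS, whereas ``BRS for continuous initial conditions'' is only the bound \eqref{eq:brs} restricted to $\C$ (together with existence of those continuous solutions on $[t_0,T]$). I will check that both directions use the statements with exactly these scopes. No genuine analytic obstacle remains, because the limiting argument via weak-* approximation and Proposition~\ref{prop:unifxconv-rel} has already been carried out in Proposition~\ref{prop:ciffm}.
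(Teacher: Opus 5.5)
Your proposal is correct and follows the paper's own route: the paper presents this corollary as a direct consequence of Corollary~\ref{cor:FCiffBRS} and Proposition~\ref{prop:ciffm}. You chain the two in the same way. Forward completeness on $\R^n\times L^\infty$ gives full BRS, and restricting to continuous initial conditions gives the ``if'' direction. BRS for continuous initial conditions gives full BRS, which by Definition~\ref{def:brs} includes forward completeness, and this is the ``only if'' direction. You also correctly note that ``BRS for continuous initial conditions'' implicitly includes existence of those solutions on $[t_0,T]$, and this is exactly what the paper's proof of Proposition~\ref{prop:ciffm} uses.
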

 \begin{rem} In many applications, the goal is to establish that a system possesses the BRS property for continuous initial conditions. Corollary~\ref{cor:fciffbrsc} shows that, for systems satisfying Assumption~\ref{as:fweaklim}, it is sufficient to prove the solutions are defined for all future times when the initial conditions lie in $\L^{\infty}$, which is, a priori, a simpler task. Reciprocally, if for some initial condition in $\L^{\infty}$ and some initial time the solution has a finite existence time, then the system cannot have BRS for continuous initial conditions. 
 \end{rem}

\subsection{Discussion}
\label{sec:discussion}

In the absence of discrete delays, i.e. when 
\begin{align*}
  \dot x(t) = f(t,x(t),(\phi\conc{t_0}x)_t)
\end{align*}
with $\phi \mapsto f(t,\xi,\phi)$ being insensitive to differences over zero-measure sets, then
item~\ref{item:discint-cont}) of Assumption~\ref{as:fweaklim} becomes trivially satisfied and only~\ref{item:weakstar-cont}) imposes a constraint on $f$. In contrast, in the absence of distributed delays, i.e. when 
\begin{align*}
  \dot x(t) = f(t,x(t),x(t-\tau_1),\ldots,x(t-\tau_\ell)),
\end{align*}
then only~\ref{item:discint-cont}) of Assumption~\ref{as:fweaklim} imposes a constraint. For systems that are in addition time invariant, i.e. for 
\begin{align*}
  \dot x(t) = f(x(t),x(t-\tau_1),\ldots,x(t-\tau_\ell)),
\end{align*}
then it has already been established \cite{bricha_lcss24} that if the system is forward complete for initial conditions in $\R^n \times L^\infty$ (or equivalently in $\L^{\infty})$, then it has bounded reachability sets. However, the proof strategy of \cite{bricha_lcss24}, proposed originally in \cite{manhai_auto24}, does not apply when the system is time-varying or it is time-invariant  but distributed delays are present. 

The property that allows to establish BRS from knowledge of forward completeness is that the solution maps $S : \R^n \times L^\infty \to \mathcal{C}([t_0,T],\R^n)$
$$ (\xi,\phi) \mapsto S(\xi,\phi) = x(\,\cdot\,;t_0,\xi,\phi) $$
are continuous for every $0 \le t_0 < T$ when the topology of uniform convergence is considered in $\mathcal{C}([t_0,T],\R^n)$ and the weak-* topology is considered in $L^\infty$. Since every closed and bounded ball in $\R^n \times L^\infty$ is then compact (Lemma~\ref{lem:weakfacts}), the continuity of the solution map is sufficient to ensure the uniform boundedness of the corresponding set of solutions, and hence BRS. This type of continuity is precisely the result of Proposition~\ref{prop:unifxconv-rel}, while the uniform boundedness of the set of solutions is given by Theorem~\ref{thm:brs}.

In this light, the most difficult point is to provide checkable and applicable conditions that ensure the continuity of the solution maps. These conditions are those of Assumption~\ref{as:fweaklim}, where the type of systems that satisfy it were provided by Proposition~\ref{prop:exa:ass3}.

\section{Conclusions}
\label{sec:conclusions}

Sufficient conditions for the boundedness of the reachability sets of nonlinear systems involving both pointwise and distributed delays were developed, considering initial conditions that are not restricted to being continuous. The conditions provided involve the concepts of weak-* convergence in the Banach space of equivalence classes of essentially bounded functions under the essential supremum norm. Future work is to provide results for time-invariant systems with inputs by considering the possible behaviors as those of a family of time-varying systems without inputs. 
Interesting questions for future research are to what extent the conditions can be relaxed and whether conditions based on different concepts can also be developed.

\appendices
\section{Some Technical Proofs} 
\subsection{Proof of Proposition~\ref{prop:exa:ass2}}
\label{app:ex:ass2}
Let $T>0$ and $R>0$. From items \ref{item:caratheodory-G}) and \ref{item:Lips-G}), $\mathcal{N}$ is well-defined, and for all $\phi$, $\psi\in B_R^{L^{\infty}}$ and $t\in [0,T]$ we have that
\begin{align*}
|\Nin{t,\phi}|\le \|M_{R,T}\|_1
\end{align*}
and
\begin{align*}
|\Nin{t,\phi}-\Nin{t,\psi}|\le \|\hat{L}_{R,T}\|_1\|\phi-\psi\|.
\end{align*}
Therefore, item \ref{item:fLip-pt-dist}) of Assumption \ref{as:Lip-pt} follows from the latter and \ref{ex:lipschitz}). Since for all $t\in [0,T]$
\begin{align*}
|f(t,0, \zero)|&=|g(t,0,\Nin{t,\zero})|\\
&\le |g(t,0,\Nin{t,\zero})-g(t,0,0)|+|g(t,0,0)|,
\end{align*}
items \ref{ex:caratheodory}) and \ref{ex:lipschitz}), and the fact that $\Nin{t,\zero}$ is uniformly bounded on $[0,T]$ imply that $f(\cdot,0,\zero)$ is essentially bounded on $[0,T]$. So, item \ref{item:f0bound}) of Assumption \ref{as:Lip-pt} follows. Let $t_0\ge 0$, $T>t_0$ and $x:[t_0,T]\to \R^n$ be continuous. Suppose first that $\phi \in \mathcal{C}$ is such that $\phi(0)=x(t_0)$. Then $\xx = (\phi\conc{t_0}x)$ is continuous on the compact interval $[t_0-\md,T]$, and hence uniformly continuous. Let  $\{t^k\}\subset [t_0,T)$ be such that $t^k \to t\in [t_0,T]$. 
Pick any $R>0$ such that $\|\xx_s\|\le R$ for all $s\in [t_0,T]$. Then, since
\begin{align*}
|\Nin{t^k,\xx_{t^k}}-\Nin{t^k,\xx_{t}}|\le \|\hat{L}_{R,T}\|_1\|\xx_{t^k}-\xx_{t}\|
\end{align*}
and $\lim_{k\to \infty}\|\xx_{t^k}-\xx_t\|=0$ because of the uniform continuity of $\xx$, it follows that
\begin{align}
    \label{eq:ex1:Ntkxxk0}
    \lim_{k\to\infty} |\Nin{t^k,\xx_{t^k}}-\Nin{t^k,\xx_{t}}|= 0. 
\end{align}
Since $G$ is continuous in its first variable, then for all $s\in [-\md,0]$,
\begin{align}
    \label{eq:ex1:dG0}
    \lim_{k\to\infty} |G(t^k,s,\xx_t(s))-G(t,s,\xx_t(s))| = 0,
\end{align}
and for almost all $s\in [-\md,0]$ and all $k$, 
\begin{align}
    |G(t^k,s,\xx_t(s))-G(t,s,\xx_t(s))|\le 2M_{R,T}(s) 
\end{align}
follows from~\ref{item:Lips-G}). Since
\begin{multline*}
|\Nin{t^k,\xx_t}-\Nin{t,\xx_t}|\\ \le
\int_{-\md}^0 |G(t^k,s,\xx_t(s))-G(t,s,\xx_t(s))| ds,
\end{multline*}
by the Lebesgue dominated convergence theorem, then 
\begin{align}
    \label{eq:ex1:Ntk0}
    \lim_{k\to\infty} |\Nin{t^k,\xx_t}-\Nin{t,\xx_t}| = 0.
\end{align} 
From \eqref{eq:ex1:Ntkxxk0}, \eqref{eq:ex1:Ntk0}, and the triangle inequality, it follows that
\begin{align*}
    \lim_{k\to\infty} |\Nin{t^k,\xx_{t^k}}-\Nin{t,\xx_{t}}| = 0 
\end{align*}
and hence the map $t \mapsto \Nin{t,\xx_t}$ is continuous. Since $g$ is Carath\'eodory and $\xx(\,\cdot-\tau_i)$ is Lebesgue measurable for $i=0,\ldots,\ell$, it follows that the map 
\begin{align*}
    t\mapsto g(t,\xx(t-\tau_{0:\ell}),\Nin{t,\xx_t}) = f(t,\xx(t-\tau_{0:\ell}), \xx_t)
\end{align*}
is Lebesgue measurable. Suppose next that $\phi \in L^ {\infty}$. Then, there exists a bounded sequence of continuous functions $\phi^k$ such that $\phi^k\to \phi$ almost everywhere and $\phi^k(0)=x(t_0)$. Define $\xx^k=\phi^k \conc{t_0}x$ and $\xx=\phi \conc{t_0} x$. Then, $\xx^k(\,\cdot-\tau_i)\to \xx(\,\cdot-\tau_i)$ almost everywhere for $i=0,\ldots,\ell$, and $\xx^k_t(s)\to \xx_t(s)$ for almost all $s\in [-\md,0]$ for all $t\in [t_0,T]$. The fact that $G(t,s,\cdot)$ is continuous then implies that $G(t,s,\xx^k_t(s))\to G(t,s,\xx_t(s))$ for almost all $s\in [-\md,0]$. Since $\{\xx^k_t\}$ is bounded in $L^{\infty}$, say $\|\xx^k_t\|\le R$ for all $k$, $|G(t,s,\xx^k_t(s))|\le M_{R,T}(s)$ for almost all $s$.  Applying the Lebesgue dominated convergence theorem we then have that
$\Nin{t,\xx^k_t}\to \Nin{t,\xx_t}$.
Since $f(t,\xx^k(t-\tau_{0:\ell}), \xx^k_t)$ is Lebesgue measurable for all $k$, and $g(t,\xx^k(t-\tau_{0:\ell}),\Nin{t,\xx^k_t})\to g(t,\xx(t-\tau_{0:\ell}),\Nin{t,\xx_t})$ for almost all $t\in [t_0,T]$ because of the continuity of $g$ with respect to the last two variables, it follows that the map $t\mapsto f(t,\xx(t-\tau_{0:\ell}),\xx_t)$ is Lebesgue measurable.

\subsection{Proof of Proposition~\ref{prop:unifxconv-rel}}
\label{ap:proof}
We first prove the following result.
\begin{lem}
  \label{lem:fweakint}
  Let the family $\F$ satisfy Assumption~\ref{as:fweaklim}. Then, the following holds: for every $\xi^0 \in \R^n$, sequence $\{\phi^k\}$ satisfying $\phi^k \wto \phi^0 \in L^\infty$, $t_0 \ge 0$, $T\in [t_0,t_0+\md]$, and $\varepsilon > 0$, there exists $j\in\N$ such that for all $t \in [t_0,T]$, all $k\ge j$ and all $f\in\F$, 
    \begin{subequations}
      \label{eq:wLipint}
      \begin{align}
        \label{eq:wLipinta}
        &\left|\int_{t_0}^t f^k(q) - f^0(q)\ dq \right| \le \varepsilon,\\
        \label{eq:fk-xx}
        &f^k(q) = f(q,\xx^k(q-\tau_{0:\ell}), \xx^k_q),\quad
        \xx^k = (\phi^k\conc{t_0} x)\\
        &\text{where }x(\cdot) = x(\,\cdot\,; t_0, \xi^0, \phi^0, f)
      \end{align}
    \end{subequations}
    denotes the unique solution of (\ref{eq:system-pt-dist-ess}) with $\xi_0 = \xi^0$.
\end{lem}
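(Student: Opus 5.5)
The plan is to split the integrand into a part where only the distributed-delay argument changes and a part where only the discrete-delay arguments change. Each part is then handled by the matching item of Assumption~\ref{as:fweaklim}. For $q\in[t_0,T]$ write
\begin{align*}
f^k(q)-f^0(q) &= \big[f(q,\xx^k(q-\tau_{0:\ell}),\xx^k_q)-f(q,\xx^k(q-\tau_{0:\ell}),\xx^0_q)\big]\\
&\quad+\big[f(q,\xx^k(q-\tau_{0:\ell}),\xx^0_q)-f(q,\xx^0(q-\tau_{0:\ell}),\xx^0_q)\big].
\end{align*}
First fix a radius $R$ with $R\ge |\xi^0|$, $R\ge\sup_k\|\phi^k\|$ (finite by Lemma~\ref{lem:weakfacts}), and $R\ge \sup_{t_0\le t\le T}|x(t)|$. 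When $\F$ is infinite, this last quantity has to be uniform over $f\in\F$, in the same spirit as $R_x$ in~(\ref{eq:Rx-def}). All arguments $\xx^k(q-\tau_i)$, $\xx^k_q$ and $\xx^0_q$ then lie in $B_R^n$ or $B_R^{L^\infty}$, up to null sets, which we may ignore by Proposition~\ref{prop:insensitive}.

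\emph{Second bracket.} Integrating it from $t_0$ to $t$ gives exactly the quantity controlled by Assumption~\ref{as:fweaklim}\ref{item:discint-cont}). That item applies with the continuous function $x$ on $[t_0,T]\subset[t_0,t_0+\md]$, extended continuously and constantly to $[t_0,t_0+\md]$ if $T<t_0+\md$. It yields $j_2$ such that this integral is at most $\varepsilon/2$ for all $t\in[t_0,T]$, $k\ge j_2$ and $f\in\F$.

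\emph{First bracket.} I will show that it tends to $0$ uniformly in $q\in[t_0,T]\setminus Z_f$ and in $f\in\F$. Its integral is then at most $\varepsilon/2$ for $k\ge j_1$, and taking $j=\max\{j_1,j_2\}$ concludes. The argument is by contradiction. Suppose there are $\delta>0$, $k_r\uparrow\infty$, $q_r\in[t_0,T]$ and $f_r\in\F$ with the bracket at $(q_r,k_r,f_r)$ at least $\delta$, and pass to a subsequence with $q_r\to q_0$. The claim is that both $\xx^{k_r}_{q_r}\wto\xx^0_{q_0}$ and $\xx^0_{q_r}\wto\xx^0_{q_0}$. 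The first follows from Lemma~\ref{lem:unifdist}, which gives $\norm{\xx^{k_r}_{q_r}-\xx^0_{q_r}}\to0$, combined with the second. The second is weak-* continuity of the shift $q\mapsto\xx^0_q$, which holds because $q\mapsto g^i_q$ is continuous in $L^1$, as in the proof of Lemma~\ref{lem:unifdist}. Applying Assumption~\ref{as:fweaklim}\ref{item:weakstar-cont}) to each of these two sequences, with limit $\xx^0_{q_0}$, and using the triangle inequality through $f(q,\xi,\xx^0_{q_0})$ gives
\[
|f(q,\xi,\xx^{k_r}_{q_r})-f(q,\xi,\xx^0_{q_r})|<\delta
\]
for large $r$, uniformly in $\xi\in B_R^{(\ell+1)n}$, $q$ and $f$. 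Evaluating at $q=q_r$ and $\xi=\xx^{k_r}(q_r-\tau_{0:\ell})$ gives the contradiction. One subtlety: the pointwise evaluation at $q_r$ must avoid the null set $Z_f$. This is harmless, since we only need the estimate for almost every $q$ before integrating, and the contradiction argument can be run with $q_r$ restricted to the complement of $Z_{f_r}$.

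The main obstacle I expect is uniformity over $\f\in\F$ in the second bracket. There the function $x=x(\cdot;t_0,\xi^0,\phi^0,f)$ depends on $f$, while Assumption~\ref{as:fweaklim}\ref{item:discint-cont}) is stated for one fixed continuous $x$, uniformly in $f$. For the single-element families used in Theorem~\ref{thm:brs} there is no issue. For general $\F$, I would first try exploiting the structure of the integrand, whose dependence on $x$ enters only through bounded, Lipschitz-controlled arguments, to approximate each $x$ by one of finitely many continuous functions. Failing that, I would read \ref{item:discint-cont}) as applying to the family of solutions with the bound $R$ above.
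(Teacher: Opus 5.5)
Your proof is correct and follows the paper's own argument. It uses the same splitting of $f^k-f^0$, Assumption~\ref{as:fweaklim}\ref{item:discint-cont}) for the discrete-delay bracket, and Assumption~\ref{as:fweaklim}\ref{item:weakstar-cont}) together with Lemma~\ref{lem:unifdist} for the distributed-delay bracket. The only difference is packaging: the paper first proves a uniform-continuity claim for $\phi\mapsto f(t,\xi_{0:\ell},\phi)$ with respect to $\norm{\cdot}$, by a Banach--Alaoglu contradiction with abstract subsequential limits, whereas you run the contradiction directly with the explicit limit $\xx^0_{q_0}$.

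The uniformity-in-$f$ issue you raise is real, and the paper's proof does not address it either. It simply fixes one absolutely continuous $x$, which suffices for the single-element families used in Theorem~\ref{thm:brs}. Your proof covers genuine families once $\sup_{f}\sup_{t}|x(t;t_0,\xi^0,\phi^0,f)|<\infty$, as in Proposition~\ref{prop:unifxconv-rel}, with two adjustments. For the first bracket, replace your $x$-dependent limit $\xx^0_{q_0}$ by a subsequential weak-* limit, exactly as in the paper's claim, since the paper's claim and the quantity $\norm{\xx^k_t-\xx^0_t}$ do not involve $x$. For the second bracket, your finite-net idea does work. The solutions are then equi-Lipschitz, so Arzel\`a--Ascoli yields a finite sup-norm $\eta$-net of solutions. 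Swapping $x$ for the nearest net element changes the integral by at most $2L(R,T)\,\eta\,\md$.
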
 
\begin{proof} Let a sequence $\{\phi^k\}$ satisfy $\phi^k \wto \phi^0 \in L^\infty$, let $t_0 \ge 0$, $T \in [t_0,t_0+\md]$, and consider an absolutely continuous function $x : [t_0,T] \to \R^n$. Define 
  \begin{align*}
    R := 1 + \max\left\{ \sup_{t\in [t_0,T]}|x(t)|, \sup_{k\in\N} \|\phi^k\|  \right\}
  \end{align*}
  which is finite by continuity of $x$ and by Lemma~\ref{lem:weakfacts}. Then, $x:[t_0,T] \to B_R^n$, $\phi^k \in B_R^{L^\infty}$ for all $k$ and $\phi^0 \in B_R^{L^\infty}$ by Lemma~\ref{lem:weakfacts}.
  Let $Z$ be the zero-measure set given by Assumption~\ref{as:fweaklim}\ref{item:weakstar-cont}) in correspondence with $T$ and $R$. Assumption~\ref{as:fweaklim}\ref{item:weakstar-cont}) and the fact that the weak-* topology in $B_R^{L^\infty}$ is induced by the metric~(\ref{eq:normstar})--(\ref{eq:def:metric}) imply the following. 
  \begin{claim}
    \label{clm:Bmetrizable}
    For every $\varepsilon_1 > 0$ there exists $\delta_1 > 0$ for which
    \begin{align*}
      |f(t,\xi_{0:\ell},\phi) - f(t,\xi_{0:\ell},\psi)| < \varepsilon_1
    \end{align*}
    for all $t\in [0,T]\setminus Z_f$, $f\in\F$, $\xi_{0:\ell} \in B_R^{(\ell+1) n}$, $\phi,\psi\in B_R^{L^\infty}$ with $\norm{\phi-\psi} < \delta_1$.
  \end{claim}
  \textit{Proof of Claim~\ref{clm:Bmetrizable}:}
  For a contradiction, suppose that there exists $\varepsilon_1>0$ and a decreasing sequence of positive values $\delta^k \to 0$ such that in correspondence with every $\delta^k > 0$ there exist $f^k\in\F$, $t^k\in [0,T]\setminus Z_{f^k}$, $\xi_{0:\ell}^k \in B_R^{(\ell+1) n}$ and $\theta^k,\psi^k\in B_R^{L^\infty}$ with $\norm{\theta^k-\psi^k} < \delta^k$ for which
  \begin{align*}
    |f^k(t^k,\xi_{0:\ell}^k,\theta^k) - f^k(t^k,\xi_{0:\ell}^k,\psi^k)| \ge \varepsilon_1.
  \end{align*}
  From boundedness, it follows that subsequences $\xi_{0:\ell}^{k_i}$, $\theta^{k_i}$ and $\psi^{k_i}$ exist such that $\xi_{0:\ell}^{k_i} \to \xi_{0:\ell}^* \in B_R^{(\ell+1) n}$, $\theta^{k_i} \wto \theta^* \in B_R^{L^\infty}$ and $\psi^{k_i} \wto \psi^* \in B_R^{L^\infty}$. Since $\norm{\theta^{k_i}-\psi^{k_i}} < \delta^{k_i} \to 0$, then $\theta^* = \psi^*$. Then, from Assumption~\ref{as:fweaklim}\ref{item:weakstar-cont}) there exist $j_1,j_2\in\N$ such that
  \begin{align*}
    |f^{k_i}(t^{k_i},\xi_{0:\ell}^{k_i},\theta^{k_i}) - f^{k_i}(t^{k_i},\xi_{0:\ell}^{k_i},\theta^*)| < \varepsilon_1/2
  \end{align*}
  for all $i\ge j_1$ and, for all $i\ge j_2$, also
  \begin{align*}
    |f^{k_i}(t^{k_i},\xi_{0:\ell}^{k_i},\psi^{k_i}) - f^{k_i}(t^{k_i},\xi_{0:\ell}^{k_i},\theta^*)| < \varepsilon_1/2.
  \end{align*}
  As a consequence, 
  \begin{align*}
    |f^{k_i}(t^{k_i},\xi_{0:\ell}^{k_i},\theta^{k_i}) - f^{k_i}(t^{k_i},\xi_{0:\ell}^{k_i},\psi^{k_i})| < \varepsilon_1
  \end{align*}
  for all $i\ge j=\max\{j_1,j_2\}$, reaching a contradiction and estalishing Claim~\ref{clm:Bmetrizable}.

  Let $\varepsilon > 0$. From Claim~\ref{clm:Bmetrizable}, in correspondence with $\varepsilon_1 = \varepsilon/(2\md)$, there exists $\delta_1 > 0$ such that $|f(t,\xi_{0:\ell},\phi) - f(t,\xi_{0:\ell},\psi)| < \varepsilon/(2\md)$ for all $f\in\F$, $t\in [0,T]\setminus Z_f$,  $\xi_{0:\ell} \in B_R^{(\ell+1) n}$ and $\norm{\phi-\psi} < \delta_1$. From Lemma~\ref{lem:unifdist}, for $\varepsilon_2 = \delta_1$ there exists $q_2 \in \N$ such that $\norm{\xx^k_t - \xx^0_t} < \varepsilon_2$ whenever $t\in [t_0,T]$ and $k\ge q_2$. As a consequence, for all $k\ge q_2$, it is true that 
  \begin{align}
    \label{eq:finc-p4}
    \big|f(t,\xi_{0:\ell},\xx^k_t) - f(t,\xi_{0:\ell},\xx^0_t)\big| < \varepsilon/(2\md)
  \end{align}
  holds uniformly for all $f\in\F$, $t\in [t_0,T]\setminus Z_f$ and $\xi_{0:\ell}\in B_R^{(\ell+1) n}$.
  Note that $|\xx^k(t-\tau_i)| \le R$ holds for almost all $t\in [t_0,T]$ for $i=1,\ldots,n$ and for all $t\in [t_0,T]$ for $i=0$. The latter fact and~(\ref{eq:finc-p4}) imply that for every $f\in\F$ and $k\in\N$, there exists a zero-measure set $Z_{f,k} \subset [t_0,T]$ such that
  \begin{align}
    \label{eq:finc-p5}
    \big|f(t,\xx^k(t-\tau_{0:\ell}),\xx^k_t) - f(t,\xx^k(t-\tau_{0:\ell}),\xx^0_t)\big| < \frac{\varepsilon}{2\md}
  \end{align}
  is true for all $t\in [t_0,T]\setminus Z_{f,k}$. In addition, from Assumption~\ref{as:fweaklim}\ref{item:discint-cont}), there exists $q_3 \in \N$ such that
  \begin{align}
    \label{eq:finc-discrete}
    \left| \int_{t_0}^{t} [{\scriptstyle f(t,\xx^k(t-\tau_{0:\ell}),\xx^0_t) - f(t,\xx^0(t-\tau_{0:\ell}),\xx^0_t)}]\ dt \right| < \frac{\varepsilon}{2}
  \end{align}
  for all $t\in [t_0,T]$, $k\ge q_3$ and $f\in\F$.
  Then, for all $t\in [t_0,T]$ and $k\ge w := \max\{q_2,q_3\}$,
  \begin{multline*}
    \left| \int_{t_0}^{t} [{\scriptstyle f(t,\xx^k(t-\tau_{0:\ell}),\xx^k_t) - f(t,\xx^0(t-\tau_{0:\ell}),\xx^0_t)}]\ dt \right| \le \\
    \int_{t_0}^{t} |{\scriptstyle f(t,\xx^k(t-\tau_{0:\ell}),\xx^k_t) - f(t,\xx^k(t-\tau_{0:\ell}),\xx^0_t)}|\ dt + \\
    \left| \int_{t_0}^{t} [{\scriptstyle f(t,\xx^k(t-\tau_{0:\ell}),\xx^0_t) - f(t,\xx^0(t-\tau_{0:\ell}),\xx^0_t)}]\ dt\right|  < \varepsilon
  \end{multline*}
  where~(\ref{eq:finc-p5}), the fact that $t - t_0 \le \md$ and~(\ref{eq:finc-discrete}) were used. 
  \end{proof}
  Next, Proposition \ref{prop:unifxconv-rel} is proved.
  Define
  \begin{subequations}
      \label{eq:R0-def}
      \begin{align}
        R_\phi &:= \sup_{k\in\N} \max\{|\xi^k|,\|\phi^k\|, \|\phi^0\|\},\\
        R &:= \max\{R_\phi,R_x,1\},\quad
        \label{eq:Fdef}
        F := 2 R + 1. 
      \end{align}
  \end{subequations}
  Note that $R_\phi$ is finite due to Lemma~\ref{lem:weakfacts}. 
  Let $\alpha : \R^n \to [0,1]$ be a smooth function satisfying
  \begin{align}
    \label{eq:h:alpha:def}
    \begin{cases}
      \alpha(\zeta) = 1 &\text{if }|\zeta|\le 2R,\\
      \alpha(\zeta) = 0 &\text{if }|\zeta|\ge F,
    \end{cases}
  \end{align}
  and define the functions $h_f : \R_{\ge 0} \times \R^{(\ell+1)n} \times L^\infty \to \R^n$ as
  \begin{align*}
    h_f(t,\zeta_{0:\ell},\psi) = \alpha(\zeta_0) f(t,\zeta_{0:\ell},\psi)
  \end{align*}
  Since $\alpha$ is smooth, and $\alpha(\zeta)$ is zero for all $|\zeta|\ge F$, then there exists $D > 0$ such that
  \begin{align*}
    |\alpha(\zeta^1)-\alpha(\zeta^2)| \le D |\zeta^1 - \zeta^2|
  \end{align*}
  holds for all $\zeta^1,\zeta^2 \in \R^n$. As a consequence, it is straightforward to check that the family $\H = \{h_f : f\in\F\}$ also satisfies Assumption~\ref{as:Lip-pt} uniformly, and that for every $f\in\F$, $\zeta_0 \in \R^n$ and $\psi\in L^\infty$, the system 
  \begin{align*}
    \dot z(t) = h_f(t,(\psi\conc{t_0}z)(t-\tau_{0:\ell}),(\psi\conc{t_0}z)_t), \quad
    z(t_0) = \zeta_0 
  \end{align*}
  has a unique maximally defined solution $z(t)=z(t;t_0,\zeta_0,\psi,f)$. Moreover, since $\dot z(t)$ is nonzero only if $z(t)$ is in a compact set, then $z(t)$ exists for all $t\ge t_0$. In addition, if $(\zeta_0,\psi) \in B_{2R}^n \times B_{2R}^{L^\infty}$, then $z(t) \in B_F^n$ for all $t\ge t_0$, and as long as $z(t) \in B_{2R}^n$, then $z(t) = x(t; t_0, \zeta_0,\psi,f)$. In particular, $z^0(t) := z(t; t_0,\xi^0,\phi^0,f) = x(t; t_0, \xi^0,\phi^0,f) =: x^0(t) $ for all $t\in [t_0,T]$.

  Let $L_1=L_1(F,2T)$ be the Lipschitz constant of the family $\H$ according to Assumption~\ref{as:Lip-pt}.
  Let $\varepsilon > 0$, $\varepsilon' = \min\{\varepsilon,1/2\}$ and $\varepsilon_1 = \varepsilon' e^{-L_1 T}/2$. Let $\ell_1 \in \N$ be such that
  \begin{align}
    |\xi^k - \xi^0| < \varepsilon_1\qquad\tforall k\ge \ell_1
  \end{align}
  Employ the notation $z^i(t) := z(t;t_0,\xi^i,\phi^i,f)$ for every $i\in \N_0$ and $f\in\F$, $\Delta z^i(t) := z^i(t) - z^0(t)$, $\zz^{i,k} := (\phi^i \conc{t_0} z^k)$ (any representative of $\phi^i$ is equivalent), and let the function $g$ be defined via $g(q,\zz) := h_f(q,\zz(t-\tau_{0:\ell}),\zz_t)$. With this shorthand notation, then $h_f(q,(\phi^k\conc{t_0}x^0)(q-\tau_{0:\ell}),(\phi^k\conc{t_0}x^0)_q)$ can be abbreviated as $g(q,\zz^{k,0})$.
  Let $\ell_2\in\N$ be given by Lemma~\ref{lem:fweakint} in correspondence with $\xi^0$, the sequence $\{\phi^k\}$, $t_0$, $\min\{t_0+\md,T\}$ and $\varepsilon_1$. Note that $\ell_2$ depends on the family $\F$ and not on a specific $f\in\F$. The difference between solutions $z^k$ and $z^0$ can be bounded as follows:
  \begin{align}
    &|\Delta z^k(t)| 
     \le |\xi^k-\xi^0| + \left| \int_{t_0}^t [g\big(q, \zz^{k,k} \big) -
    g\big(q, \zz^{0,0} \big)] \ dq \right| \notag \\
    & \le |\xi^k-\xi^0| + \int_{t_0}^t \Big| g\big(q, \zz^{k,k} \big) -
    g\big(q, \zz^{k,0} \big) \Big| dq\notag \\
    & \quad 
    + \left| \int_{t_0}^t [g\big(q, \zz^{k,0} \big) -
      g\big(q, \zz^{0,0} \big)] \ dq \right|
    \label{eq:Deltbnd1}
  \end{align}
  Note that $g(q,\zz^{k,0}) = f(q,(\phi^k\conc{t_0}x^0)(q-\tau_{0:\ell}),(\phi^k\conc{t_0}x^0)_q)$ for all $k \in \N_0$ and $q\in [t_0,T]$ because $|(\phi^k\conc{t_0}x^0)(q)| \le R$ for all $q\in [t_0,T]$ (and all $f\in\F$) causes $\alpha((\phi^k\conc{t_0}x^0)(q)) = 1$, according to~(\ref{eq:h:alpha:def}). From~(\ref{eq:Deltbnd1}), then
  \begin{align}
    \label{eq:Lipbnd2}
    |\Delta z^k(t)|
    &\le \varepsilon_1 + \int_{t_0}^t L_1 \sup_{t_0 \le s \le q} |\Delta z^k(s)| dq + \varepsilon_1, 
  \end{align}
  where Assumption~\ref{as:Lip-pt}\ref{item:fLip-pt-dist}) for $h_f$ has been used, which holds uniformly for all $f\in\F$, for all $k\ge j := \max\{\ell_1,\ell_2\}$ and all $t\in [t_0,T]$, the latter because if $T>t_0+\md$, then $g(q,\zz^{k,0}) - g(q,\zz^{0,0}) = 0$ for all $q \in [t_0+\md,T]$. 
  Taking the supremum, from~(\ref{eq:Lipbnd2}) one can reach
  \begin{align*}
    \sup_{t_0\le s \le t} |\Delta z^k(s)| \le
    2\varepsilon_1 + L_1 \int_{t_0}^t \sup_{t_0\le s\le q} |\Delta z^k(s)| dq
  \end{align*}
  and through application of Gronwall inequality,
  \begin{align}
    \sup_{t_0\le s \le t} |\Delta z^k(s)| &\le 2 \varepsilon_1 e^{L_1 (t-t_0)} 
    \label{eq:delta-result}
    \le 2\varepsilon_1 e^{L_1 T} = \varepsilon' \le \varepsilon
  \end{align}
  which holds for all $t\in [t_0,T]$, all $k\ge j$ and all $f\in\F$. 
  Then, it should be true that for all $k\ge j$,
  \begin{align*}
    |z^k(t)| &\le
    \sup_{t_0\le t \le T} |z^0(t) + \Delta z^k(t) |
    \le R_x + \varepsilon' < 2R
  \end{align*}
  and therefore $z^k(t) = x(t; t_0,\xi^k,\phi^k,f)$ for all $t\in [t_0,T]$ and $f\in\F$, and the result is established from~(\ref{eq:delta-result}).



\bibliographystyle{plain}
\bibliography{FCimpliesBRS_input_TAC}

\end{document}